\documentclass[10pt,journal]{IEEEtran}

\usepackage{amssymb}
\usepackage{amsmath}
\usepackage{cite}
\usepackage{url}
\usepackage{xcolor}
\usepackage{cite,graphicx,amsmath,amssymb}
\usepackage{fancyhdr}
\usepackage{mdwmath}
\usepackage{mdwtab}
\usepackage{caption}
\usepackage{amsthm}
\usepackage{setspace}
\usepackage{hyperref}
\usepackage{algorithm}
\usepackage{algorithmic}
\usepackage{multirow}
\usepackage{makecell}
\usepackage{mathtools}
\usepackage{subcaption}
\usepackage{bm}
\usepackage{tikz}
\usepackage{xurl}
\usepackage{stfloats}
\usepackage{mathrsfs}

\usepackage{booktabs}
\usepackage{multirow}
\usepackage{siunitx}
\usepackage{balance}

\hypersetup{colorlinks=true,
linkcolor=blue,
citecolor=blue,      
urlcolor=black,
}

\graphicspath{{./src/img/}}

\DeclareMathOperator*{\argmax}{arg\,max}

\newtheorem{theorem}{Theorem}

\newtheorem{lemma}{Lemma}

\newtheorem{assumption}{Assumption}

\newtheorem{corollary}{Corollary}

\allowdisplaybreaks
\NewDocumentCommand{\multiubrace}{mmm}
 {
  \egreg_multiubrace:nnn {#1} {#2} {#3}
 }

\expandafter\def\expandafter\normalsize\expandafter{%
	\normalsize%
	\setlength\abovedisplayskip{4pt}%
	\setlength\belowdisplayskip{4pt}%
	\setlength\abovedisplayshortskip{2pt}%
	\setlength\belowdisplayshortskip{2pt}%
}
\begin{document}
\title{Bridging Standardized Codebook and Site-Specific Beamforming: A Unified Limited-Feedback Framework} 
\author{{Cheng-Jie Zhao, Zhaolin Wang,~\IEEEmembership{Member, IEEE}, Zongyao Zhao,~\IEEEmembership{Member, IEEE}, and Yuanwei Liu,~\IEEEmembership{Fellow, IEEE}}
	\vspace{-0.9cm}
\thanks{The authors are with Department of Electrical and Computer Engineering, The University of Hong Kong, Hong Kong (e-mail: chengjie\_zhao@connect.hku.hk; zhaolin.wang@hku.hk; zongyao@hku.hk; yuanwei@hku.hk).}
}
\maketitle

\begin{abstract}
A site-specific Type-II codebook design is proposed for downlink massive multiple-input multiple-output (MIMO) limited-feedback beamforming. The key idea is to embed a learned site-specific propagation prior into the Type-II channel state information (CSI) feedback pipeline. Specifically, the base station (BS) uses a low-overhead reference signal received power (RSRP) fingerprint collected during synchronization signal block (SSB) probing to infer a user equipment (UE)-dependent dominant beam subspace before explicit CSI acquisition. The UE then estimates and feeds back only the low-dimensional effective channel coefficients within this inferred subspace, thereby avoiding full-dimensional online subspace discovery while retaining a rich multi-beam representation capability. To analyze the proposed design and compare it with standardized feedback mechanisms, a unified subspace-projection framework is developed by jointly characterizing CSI acquisition, UE-side compression, BS-side reconstruction, and effective spectral efficiency. Under this framework, Type-I, Type-II, port-selection feedback, and the proposed scheme are interpreted as different ways of inducing a feedback representation subspace. The probing codebook and the BS-side subspace inference network are then formulated as a coupled task-oriented design problem and are optimized end-to-end by maximizing the normalized CSI-capture efficiency. Extensive simulation results demonstrate that the proposed feedback scheme achieves Type-II-comparable CSI-capture capability with substantially lower online overhead and UE-side complexity, thereby improving the effective spectral efficiency.
\end{abstract}

\begin{IEEEkeywords}
Limited-feedback systems, site-specific beamforming, Type-II codebook
\end{IEEEkeywords}
\vspace{-0.5cm}
\section{Introduction}
\IEEEPARstart{B}EAMFORMING is a key enabling technology for modern cellular systems, especially in massive multiple-input multiple-output (MIMO) and high-frequency deployments where directional transmission is indispensable for compensating path loss and improving spectral efficiency \cite{Heath2016mmWave}. In practical downlink (DL) systems, however, high-quality beamforming critically depends on how accurately the base station (BS) can acquire the user equipment (UE)-specific channel state information (CSI). This challenge is particularly pronounced in frequency-division-duplex (FDD) systems and in scenarios where reciprocity-based acquisition is imperfect or unavailable \cite{Hassibi2003Training}. In such cases, the DL CSI must be learned through BS-side reference-signal (RS) transmission, UE-side channel estimation and compression, and subsequent CSI feedback. The resulting overhead can become substantial when the antenna dimension is large and the channel varies rapidly, which makes limited-feedback beamforming a fundamental yet challenging design problem \cite{Love2008LimitedFeedback}. \\
\indent Current 3rd Generation Partnership Project (3GPP) New Radio (NR) systems address this problem through standardized codebook-based CSI feedback mechanisms specified in TS 38.214 \cite{3GPP38214}, among which Type-I, Type-II, and enhanced Type-II port-selection codebook (PSC) schemes constitute the main design options. These schemes offer different tradeoffs between signaling overhead, UE-side complexity, and beamforming accuracy. Type-I relies on single-beam quantization and therefore has the lowest feedback and computational cost, but its representation capability is limited in multipath environments. Type-II provides a more flexible multi-beam subspace representation and often achieves much better beamforming quality, but this gain comes at the price of full-dimensional CSI acquisition, heavier beam search, and higher UE-side processing and feedback overhead. PSC provides a structured alternative, but its representation capability is still constrained by the selected port domain. As the core codebook-based feedback schemes in NR, they have been extensively discussed in the broader literature on limited feedback, beam management, and 3GPP MIMO codebooks \cite{Love2008LimitedFeedback,Giordani2019BeamManagementNR,Fu2023TutorialCodebooks}. \\
\indent Existing studies have mainly followed two separate lines. One line remains within the standardized or near-standardized codebook-based feedback framework and seeks to improve beam management or CSI feedback while preserving compatibility with practical NR signaling structures and feedback formats. In particular, the authors of \cite{Dreifuerst2024MLCodebook} proposed a machine-learning-based codebook design for NR initial access and Type-II feedback . In a subsequent work, they further developed a neural codebook design framework for broader MIMO network beam-management tasks \cite{Dreifuerst2025NeuralCodebook}. These works show that data-driven codebook design can improve practical beam management under standardized signaling constraints. However, they still treat Type-I, Type-II, and PSC largely within their individual formulations, and therefore do not provide a common analytical lens for systematically exposing the intrinsic overhead-complexity-performance tradeoffs among the conventional feedback mechanisms. \\
\indent A separate line of research moves toward site-specific or environment-aware beamforming, where learned propagation structure or site-specific measurements are exploited for probing, beam alignment, or beam synthesis. For example, the authors of \cite{Heng2022SiteSpecificProbing} studied site-specific probing and later developed a grid-free beam-alignment framework in \cite{Heng2024GridFree} based on learned site-dependent measurements. The authors of \cite{Ning2023RSRPCodebook} proposed an RSRP-based environment-adaptive codebook design , while the authors of \cite{Wu2024CKMBeamforming} investigated environment-aware hybrid beamforming by leveraging channel knowledge maps. More recently, generative site-specific beamforming frameworks proposed in \cite{sim,SSBFMAG2} have further shown that coarse environment-dependent observations can already be highly informative for downstream beamforming and subspace inference . However, these works usually depart from the current standardized limited-feedback architecture and therefore do not explain how site-specific designs should be systematically incorporated into a limited-feedback system, nor how the resulting gain should be quantified under the same overhead-efficiency metric as conventional feedback schemes. \\
\indent Taken together, these two lines of research reveal a missing middle ground. Existing codebook-based feedback schemes preserve practical NR signaling structures, but they still rely on online dominant-subspace discovery from high-dimensional CSI at the UE, which leads to substantial CSI acquisition, feedback, and UE-side processing overhead. By contrast, site-specific and environment-aware beamforming works suggest that wireless propagation around a BS is often not arbitrary: persistent geometric features such as dominant reflectors, blockage patterns, and street layouts induce recurring low-dimensional channel structures that can be learned from historical or simulated site-specific channel data. In this paper, this learned propagation prior is referred to as \emph{site-specific information (SSI)}. However, existing site-specific methods usually fall outside the standardized limited-feedback architecture and therefore do not explain how such prior knowledge should be systematically incorporated into a practical limited-feedback system.
\indent This observation motivates using SSI not as a replacement for CSI feedback, but as prior information to assist the feedback process itself. Specifically, we develop an \emph{SSI-enhanced limited-feedback scheme} in which the BS combines the learned site-specific prior with a low-overhead reference signal received power (RSRP) fingerprint collected during the synchronization signal block (SSB) probing stage to infer a UE-dependent dominant subspace before explicit CSI acquisition. The UE then only needs to estimate and feed back the low-dimensional effective CSI coefficients within that inferred subspace. In this way, the proposed design preserves the practical limited-feedback structure while reducing online CSI acquisition overhead and UE-side complexity. The main contributions are summarized as follows:
\begin{itemize}
	\item We develop a unified optimization framework for limited-feedback beamforming, where CSI acquisition, UE-side compression, BS-side reconstruction, and the resulting spectral efficiency are jointly characterized from a subspace-projection perspective. This framework also unifies the interpretation of standardized NR and the proposed schemes, enabling a theoretical analysis of their CSI-capture efficiency, overhead, and computational complexity tradeoffs.
	\item We propose a site-specific Type-II feedback scheme, where the BS leverages offline learned SSI and a low-overhead RSRP fingerprint from SSB probing to infer a UE-dependent dominant beam subspace before explicit CSI acquisition. Based on this subspace, the UE only needs to estimate and feed back low-dimensional effective channel coefficients, thus retaining the representation advantage of Type-II while substantially reducing online CSI acquisition overhead.
	\item We propose a learning-based end-to-end method for the joint design of the probing codebook and the BS-side subspace inference network. In particular, the two components are co-optimized to ensure that the probing stage generates highly informative RSRP fingerprints while the inference stage delivers accurate dominant-subspace predictions, thereby maximizing CSI-capture efficiency under limited overhead.
	\item Extensive simulations show that the proposed framework achieves a significantly better overhead-efficiency tradeoff than conventional feedback schemes. In particular, the learned probing design consistently outperforms competing baselines, while the proposed site-specific Type-II scheme attains CSI-capture capability on par with Type-II with substantially lower online CSI acquisition overhead and UE-side complexity.
\end{itemize}

\indent The remainder of this paper is organized as follows. Section~\ref{sec:model} introduces the system model, including the geometric channel model and the SSB-based initial access procedure. Section~\ref{sec:framework} develops the unified optimization framework for limited-feedback beamforming. Section~\ref{sec:conventional} reinterprets the conventional Type-I, Type-II, and PSC mechanisms within this framework. Section~\ref{sec:proposed} presents the proposed site-specific Type-II feedback scheme and its theoretical analysis. Section~\ref{sec:realization} formulates the joint design of the site-specific probing codebook and the subspace inference rule and develops the corresponding learning-based solver. Section~\ref{sec:simulation} provides numerical results. Finally, Section~\ref{sec:conclusion} concludes the paper. \\
\indent \textit{Notation:} Scalars, vectors, matrices, and sets are denoted by italic letters, boldface lowercase letters, boldface uppercase letters, and calligraphic letters, respectively. $({\cdot})^T$, $({\cdot})^H$, and $({\cdot})^{\dagger}$ denote the transpose, Hermitian transpose, and Moore--Penrose pseudoinverse, respectively. $\|\cdot\|$ and $|\cdot|$ denote the Euclidean norm and the absolute value, respectively. $\mathrm{span}(\cdot)$ denotes the subspace spanned by its argument, $\mathbf{I}_N$ denotes the $N\times N$ identity matrix, and $\mathbb{E}[\cdot]$ denotes the expectation operator. $\mathcal{CN}(\mu,\mathbf{\Sigma})$ denotes the circularly symmetric complex Gaussian (CSCG) distribution with mean $\mu$ and covariance $\mathbf{\Sigma}$.

\section{System Model} \label{sec:model}
We consider a single-cell DL system in which a BS equipped with $N_t$ fully digital transmit antennas serves a single-antenna UE at a time. The UE is located on a two-dimensional (2D) plane, and the DL channel is assumed to be block fading, i.e., approximately constant within one coherence interval. The beam management and feedback procedures considered in this paper are carried out within such a coherence interval. 

\subsection{Channel Model}
We adopt a sparse geometric channel model, which represents the channel as a superposition of $L$ dominant propagation paths \cite{Ayach2014SpatiallySparse}. Let $\mathbf{h}\in\mathbb{C}^{N_t\times 1}$ denote the instantaneous DL channel vector. It is represented as
\begin{equation} \label{h}
	\mathbf{h} = \sum_{l=1}^{L} \alpha_l \mathbf{a}(\varphi_l) = \mathbf{A} \boldsymbol{\alpha},
\end{equation}
where $\alpha_l \in \mathbb{C}$ denotes the complex gain of the $l$-th path, including large-scale attenuation and small-scale fading, and $\mathbf{a}(\varphi_l)$ is the transmit-array steering vector associated with the angle of departure $\varphi_l$. We collect the steering vectors and path gains as $\mathbf{A}=[\mathbf{a}(\varphi_1),\ldots,\mathbf{a}(\varphi_L)]\in\mathbb{C}^{N_t\times L}$ and $\boldsymbol{\alpha}=[\alpha_1,\ldots,\alpha_L]^T\in\mathbb{C}^{L\times 1}$.
\begin{assumption}[\emph{Near-orthogonal angular model}] \label{ass1} \normalfont
	The BS employs a uniform linear array (ULA) with inter-element spacing $d$, and the UE lies in the far field of the array. For the $l$-th path, define the spatial frequency as $u_l\triangleq \frac{d}{\lambda}\sin(\varphi_l)$, so that the normalized steering vector is
	\begin{equation}
		\mathbf{a}(u_l)=\frac{1}{\sqrt{N_t}} \left[1, e^{j2\pi u_l}, \cdots, e^{j2\pi (N_t-1)u_l} \right]^T.
	\end{equation}
	The dominant path spatial frequencies are sufficiently separated, and the array aperture is large enough to resolve them. Consequently, the off-diagonal correlations among the dominant steering vectors are small, and the array response matrix satisfies the approximate orthogonality condition
	\begin{equation}
		\mathbf{A}^H\mathbf{A} \approx \mathbf{I}_L.
	\end{equation}
\end{assumption}
\indent \textbf{Assumption~\ref{ass1}} is the channel-structure condition used in the subsequent path-capture analysis. It implies that, in sparse site-specific environments, the channel energy can be approximately decomposed across a small number of resolvable angular components, which makes a subspace-capture interpretation meaningful. With a unit-power transmit beamformer $\mathbf{w}\in\mathbb{C}^{N_t\times 1}$ and data symbol $s$, the UE receives
\begin{equation} \label{pdsch}
	y = \sqrt{P_t} \mathbf{h}^H\mathbf{w}s + n,
\end{equation}
where $P_t$ is the transmit power and $n\sim\mathcal{CN}(0,\sigma_n^2)$ is temporally independent additive noise.
\vspace{-0.3cm}
\subsection{SSB-Based Initial Access and RSRP Fingerprint}
Before CSI feedback, the UE first performs initial access through SSB beam sweeping. Specifically, the BS transmits a predefined SSB probing codebook $\mathbf{B}=[\mathbf b_1,\ldots,\mathbf b_K]\in\mathbb{C}^{N_t\times K}$ \cite{3GPP38211,3GPP38215}. For the $i$-th probing beam, the received SSB signal is modeled as
\begin{equation}
	{{\mathbf{y}}_{{\rm{SSB}}, i}} = \sqrt{P_{\text{SSB}}}\mathbf{h}^H\mathbf{b}_i{{\bf{s}}_{{\rm{SSB}}}} + {\bf{n}}_{{\rm{SSB}}, i},
\end{equation}
where $P_{\text{SSB}}$ is the SSB transmit power, $\mathbf{s}_{\rm SSB}\in\mathbb{C}^{L_s\times 1}$ collects the SSB symbols used for this measurement, and $\mathbf{n}_{{\rm SSB},i}\sim\mathcal{CN}(\mathbf 0_{L_s},\sigma_n^2\mathbf I_{L_s})$ is the corresponding noise vector \footnote{In NR, SSB measurements are obtained over specific time-frequency resource elements. For analytical compactness, we aggregate these resource elements into a vector and do not separately track their time-frequency indices.}. The UE measures the RSRP of the $i$-th beam by averaging the received SSB power over the aggregated SSB symbols,
\begin{equation}
	\text{RSRP}_{\mathbf{b}_i} \triangleq \frac{1}{L_s}\sum_{t=1}^{L_s} \left|y_{{\rm SSB},i}^{(t)}\right|^2.
\end{equation}
Following \cite{sim}, the dB-domain RSRP can be modeled as a noisy observation of an ideal beam power,
\begin{equation}
	r_{\mathbf{b}_i} = 10 \log_{10} \text{RSRP}_{\mathbf{b}_i} = r_{\mathbf{b}_i}^0 + n_{\mathbf{b}_i},
\end{equation}
where $r_{\mathbf{b}_i}^0$ is the noise-free dB-domain RSRP value and $n_{\mathbf{b}_i}$ is a Gaussian perturbation with mean $\mu_b$ and variance $\sigma_b^2$. The detailed expressions of $r_{\mathbf{b}_i}^0$, $\mu_b$, and $\sigma_b^2$ can be found in Eq. (7), Eq. (8a), and Eq. (8b) of \cite{sim}. After sweeping all $K$ SSB beams, the UE obtains the RSRP fingerprint
\begin{equation} \label{RSRP}
	\mathbf{r}_\mathbf{B} = \mathbf{r}_\mathbf{B}^0 + \mathbf{n}_\mathbf{B},
\end{equation}
where $\mathbf{r}_\mathbf{B}^0=[r_{\mathbf{b}_1}^0,\ldots,r_{\mathbf{b}_K}^0]^T$ and $\mathbf{n}_\mathbf{B}=[n_{\mathbf{b}_1},\ldots,n_{\mathbf{b}_K}]^T$. During initial access, the UE measures $\mathbf{r}_\mathbf{B}$ to determine the access beam and the associated random-access resources before subsequent CSI acquisition. \\
\indent The above initial-access stage only provides coarse beam-level information for synchronization and access establishment, but it is generally insufficient for high-quality downlink beamforming. To support data transmission, the BS still needs a more refined UE-specific CSI representation, since the resulting beamforming performance critically depends on the quality of CSI acquisition. In practical limited-feedback systems, this requires additional CSI-reference signal (CSI-RS) transmission, UE-side channel estimation, CSI compression, and feedback. When the transmit-antenna dimension is large, such CSI acquisition can incur substantial training, feedback, and processing overhead. Therefore, the key issue is how efficiently this representation can be acquired under limited online resources.
\vspace{-0.3cm}
\section{A Unified Framework for Limited-Feedback Beamforming} \label{sec:framework}
\subsection{CSI Acquisition and Feedback Pipeline}
Recall the DL data transmission model in (\ref{pdsch}). If the DL CSI $\mathbf h$ is available at the BS, the single-user single-stream beamforming problem can be characterized as
\begin{equation} \label{eq:ideal_bf}
	\mathbf{w}^\star = \arg\max_{\mathbf{w}: \|\mathbf{w}\|^2=1} |\mathbf{h}^H\mathbf{w}|^2,
\end{equation}
with the optimal solution to be the matched-filter or maximum-ratio-transmission (MRT) beamformer $\mathbf{w}^\star=\mathbf h/\|\mathbf h\|$. In practice, however, the BS does not directly know the UE-specific DL channel. In TDD systems, uplink sounding and channel reciprocity may provide a useful estimate \cite{Larsson2014MassiveMIMO}, but calibration and refinement can still be required when reciprocity is imperfect or when the full-dimensional channel is not directly observable. In FDD systems, the DL CSI must instead be estimated at the UE and fed back to the BS. \\
\indent We abstract this DL acquisition process as three coupled stages: 1) BS-side CSI-RS transmission, 2) UE-side CSI extraction and compression, and 3) BS-side reconstruction. Let $\mathbf{S}_{\text{CSI}}\in\mathbb{C}^{N_c\times L_c}$ denote the CSI-RS training matrix transmitted over $L_c$ channel uses from $N_c$ CSI-RS ports, which represent effective pilot transmission dimension seen by the UE, rather than necessarily a physical transmit antenna, and let $\mathbf C\in\mathbb{C}^{N_t\times N_c}$ denote the CSI-RS precoder \footnote{As for SSB, CSI-RS symbols over time-frequency resource elements are aggregated into a training matrix for analytical compactness.}. The received CSI-RS is given by
\begin{equation}
	\mathbf{y}_{\text{CSI}}^T(\mathbf{C}) = \sqrt{P_{\text{CSI}}}\,\mathbf{h}^H \mathbf{C} \mathbf{S}_{\text{CSI}} + \mathbf{n}_{\text{CSI}}^T,
\end{equation}
where $P_{\text{CSI}}$ is the CSI-RS transmit power and $\mathbf{n}_{\text{CSI}}\sim\mathcal{CN}(\mathbf 0_{L_c},\sigma_n^2\mathbf I_{L_c})$. Based on $\mathbf{y}_{\text{CSI}}(\mathbf C)$, the UE obtains a CSI representation through an acquisition mapping $\mathcal R$, e.g., least-squares (LS) or minimum mean square error (MMSE) estimation \cite{Hassibi2003Training}. In general, $L_c$ should be no less than $N_c$ to reliably recover $\mathbf{h}$. Hence, the overhead for DL channel estimation is at least $\mathcal{O}(N_c^2)$. To avoid distraction from estimation error, we assume perfect acquisition under the required training dimension and write $\mathbf h=\mathcal R(\mathbf{y}_{\text{CSI}})$ with $\mathcal{R}: \mathbb{C}^{L_c \times 1} \rightarrow \mathbb{C}^{N_t \times 1}$. \\
\indent Since feeding back full CSI is generally prohibitive, the UE compresses the acquired CSI into a lower-dimensional feedback message
\begin{equation} \label{eq:q_mapping}
	\mathbf{z} = \mathcal{Q}(\mathbf{h}),
\end{equation}
where $\mathcal Q:\mathbb{C}^{N_t\times 1}\rightarrow\mathbb{C}^{N_q\times 1}$ represents the UE-side compression rule. The dimension $N_q$ should be understood as a normalized feedback-payload proxy, e.g., the number of scalar coefficients or indices conveyed to the BS. Here, $N_q$ is only adopted as an abstract channel-use-equivalent feedback overhead. The detailed modeling of the uplink feedback channel is beyond the scope of this paper. After receiving $\mathbf z$, the BS reconstructs the channel as
\begin{equation} \label{eq:f_mapping}
	\hat{\mathbf{h}} = \mathcal{F}(\mathbf{z}),
\end{equation}
where $\mathcal F:\mathbb{C}^{N_q\times 1}\rightarrow\mathbb{C}^{N_t\times 1}$ is the BS-side reconstruction mapping. In general, $\mathcal F$ is not the inverse of $\mathcal Q$, because the UE-side compression is lossy. As a consequence, the optimal MRT beamformer based on this reconstructed channel is given by
\begin{equation} \label{eq:unified_pipeline}
	\hat{\mathbf{w}} = \frac{\hat{\mathbf{h}}}{\| \hat{\mathbf{h}} \|}  = 	\frac{\mathcal{F}\big(\mathcal{Q}(\mathcal{R}(\mathbf{y}_{\text{CSI}} (\mathbf{C})))\big)}{\|\mathcal{F}\big(\mathcal{Q}(\mathcal{R}(\mathbf{y}_{\text{CSI}} (\mathbf{C})))\big)\|}.
\end{equation}
\vspace{-0.6cm}
\subsection{Beamforming Design with Subspace Projection}
The above pipeline makes the overhead-performance tradeoff explicit. Specifically, the online overhead consists of SSB-stage overhead and CSI acquisition/feedback overhead, which can be calculated as $T_o=T_{\rm SSB}+T_{\rm CSI}$ in channel uses. Here, $T_{\rm SSB}$ accounts for SSB sweeping and, when needed, the associated RSRP reporting, while $T_{\rm CSI}$ accounts for CSI-RS transmission and the subsequent CSI feedback payload. For a coherence block of $T_c$ channel uses and denoting $\rho=P_t/\sigma_n^2$, the limited-feedback beamforming design problem for maximizing the effective rate can be abstracted as
\begin{equation} \label{eq:unified_problem}
	\max_{\mathbf{C},\,\mathcal{Q},\,\mathcal{F}}\; \left( {1 - \frac{{{T_o}}}{{{T_c}}}} \right){\log _2}\left( {1 + {\rho}{{\left| {{{\mathbf{h}}^H}\hat{\mathbf{w}}} \right|}^2}} \right).
\end{equation}
\indent To streamline the beamforming design, we now introduce a subspace project theory of the mapping given in (13). In particular, $\mathcal Q$ and $\mathcal F$ determine the UE-feedback information and the CSI reconstruction quality, which essentially inducing a scheme-specific CSI representation subspace $\mathcal U$ together. Let $\mathbf{U}$ be a basis matrix such that $\mathcal{U}=\mathrm{span}(\mathbf{U})$. The orthogonal projector onto $\mathcal U$ is then given by $\mathbf{P}_\mathcal{U}=\mathbf{U}(\mathbf{U}^H\mathbf{U})^{-1}\mathbf{U}^H$.
\begin{lemma}[\emph{Subspace-projection characterization}] \label{theo1} \normalfont
For any limited-feedback scheme, i.e., $\mathcal Q$ and $\mathcal F$, whose reconstructed CSI is restricted to a subspace $\mathcal{U}$, the optimal MRT beamformer supported by this subspace is
\begin{equation}
	\hat{\mathbf{w}}_\mathcal{U}=\frac{\mathbf{P}_\mathcal{U}\mathbf{h}}{\|\mathbf{P}_\mathcal{U}\mathbf{h}\|},
\end{equation}
where $\mathbf{P}_\mathcal{U}\mathbf{h}\neq \mathbf{0}$. The resulting beamforming gain is $\|\mathbf{P}_\mathcal{U}\mathbf{h}\|^2$. Under \textbf{Assumption~\ref{ass1}}, it can be approximated as
\begin{equation}
	\|\mathbf{P}_\mathcal{U}\mathbf{h}\|^2 \approx \sum_{l=1}^{L}|\alpha_l|^2 g_l,
\end{equation}
where $g_l\triangleq\|\mathbf{P}_\mathcal{U}\mathbf{a}(u_l)\|^2\in[0,1]$ is the projected channel gain of the $l$-th path. 
\end{lemma}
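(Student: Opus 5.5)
The plan has two parts: an exact optimization over the feasible set, and then an approximation based on the path structure. For the first part, the point is that any beamformer the BS can build from the reconstructed CSI is $\hat{\mathbf w}=\hat{\mathbf h}/\|\hat{\mathbf h}\|$ with $\hat{\mathbf h}\in\mathcal U$. So the best achievable gain is
\[
\max_{\mathbf w\in\mathcal U,\ \|\mathbf w\|=1}|\mathbf h^H\mathbf w|^2 .
\]
Since $\mathbf P_\mathcal U$ is Hermitian and idempotent, any $\mathbf w\in\mathcal U$ satisfies $\mathbf w=\mathbf P_\mathcal U\mathbf w$. This gives $\mathbf h^H\mathbf w=(\mathbf P_\mathcal U\mathbf h)^H\mathbf w$. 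Cauchy--Schwarz then yields $|\mathbf h^H\mathbf w|^2\le\|\mathbf P_\mathcal U\mathbf h\|^2$. Equality holds if and only if $\mathbf w$ is parallel to $\mathbf P_\mathcal U\mathbf h$, which is admissible because $\mathbf P_\mathcal U\mathbf h\in\mathcal U$ and is nonzero by hypothesis. This identifies $\hat{\mathbf w}_\mathcal U$ up to a common phase. Substituting it back gives
\[
|\mathbf h^H\hat{\mathbf w}_\mathcal U|^2=\frac{|\mathbf h^H\mathbf P_\mathcal U\mathbf h|^2}{\|\mathbf P_\mathcal U\mathbf h\|^2}=\|\mathbf P_\mathcal U\mathbf h\|^2,
\]
where we use $\mathbf h^H\mathbf P_\mathcal U\mathbf h=\|\mathbf P_\mathcal U\mathbf h\|^2$.

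For the approximation, I would substitute $\mathbf h=\mathbf A\boldsymbol\alpha$ from (\ref{h}) and expand:
\[
\|\mathbf P_\mathcal U\mathbf h\|^2=\boldsymbol\alpha^H\mathbf A^H\mathbf P_\mathcal U\mathbf A\boldsymbol\alpha=\sum_l|\alpha_l|^2 g_l+\sum_{l\neq m}\alpha_l^*\alpha_m\,\mathbf a(u_l)^H\mathbf P_\mathcal U\mathbf a(u_m).
\]
The diagonal terms give exactly the claimed sum. The bounds $g_l\in[0,1]$ follow because $\mathbf P_\mathcal U$ is an orthogonal projector and $\|\mathbf a(u_l)\|=1$.

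The main obstacle is justifying that the cross terms are negligible. Assumption~\ref{ass1} only states $\mathbf A^H\mathbf A\approx\mathbf I_L$, i.e., $\mathbf a_l^H\mathbf a_m\approx0$. It does not directly state $\mathbf a_l^H\mathbf P_\mathcal U\mathbf a_m\approx0$. I would first split each steering vector as $\mathbf a_l=\mathbf P_\mathcal U\mathbf a_l+(\mathbf I-\mathbf P_\mathcal U)\mathbf a_l$. Then
\[
\mathbf a_l^H\mathbf P_\mathcal U\mathbf a_m=\mathbf a_l^H\mathbf a_m-\mathbf a_l^H(\mathbf I-\mathbf P_\mathcal U)\mathbf a_m .
\]
The first term is small by Assumption~\ref{ass1}. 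For the second term, I would argue within the resolvable-angle model: the subspaces relevant here are spanned by DFT-type beams, and each well-separated path is, to leading order, either essentially captured by $\mathcal U$ or essentially orthogonal to it. In either case the residual components of distinct paths are nearly orthogonal, or one of them is nearly zero. Via Cauchy--Schwarz, each cross term is then bounded by $\sqrt{g_l g_m}$ times a small correlation, or by $\sqrt{(1-g_l)(1-g_m)}$ times a small quantity. Together these give the stated approximation with an error vanishing as the inter-path correlations go to zero. If this geometric argument proves too delicate, the fallback is to take expectation over independent zero-mean path gains $\alpha_l$. The cross terms then vanish exactly, so the approximation holds in mean without any geometric argument.
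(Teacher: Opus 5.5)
Your argument follows the same route as the paper's proof, but it is more careful at both steps. The paper simply asserts that the MRT direction within $\mathcal U$ is the normalized projection, whereas your Cauchy--Schwarz argument via $\mathbf w=\mathbf P_\mathcal U\mathbf w$ actually proves it. The expansion $\boldsymbol\alpha^H\mathbf A^H\mathbf P_\mathcal U\mathbf A\boldsymbol\alpha$ is identical to the paper's. For the approximation, the paper's proof only asserts that, under Assumption~\ref{ass1}, the off-diagonal entries of $\mathbf A^H\mathbf P_\mathcal U\mathbf A$ are negligible. That is exactly the step you flag, and your doubt is justified: it does not follow from $\mathbf A^H\mathbf A\approx\mathbf I_L$ for an arbitrary $\mathcal U$. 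Take exactly orthogonal steering vectors $\mathbf a_1,\mathbf a_2$ (e.g.\ $u_2-u_1=k/N_t$ with $0<k<N_t$) and $\mathcal U=\mathrm{span}(\mathbf a_1+\mathbf a_2)$. Then $g_1=g_2=\tfrac12$ and $\mathbf a_1^H\mathbf P_\mathcal U\mathbf a_2=\tfrac12$, so $\|\mathbf P_\mathcal U\mathbf h\|^2=\tfrac12|\alpha_1+\alpha_2|^2$. For $|\alpha_1|=|\alpha_2|$ this ranges over $[0,2|\alpha_1|^2]$ as the relative phase varies, while $\sum_l|\alpha_l|^2g_l=|\alpha_1|^2$.

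So some hypothesis on $\mathcal U$ is unavoidable, and you should state it explicitly rather than infer it from the resolvable-angle picture. Your dichotomy is the right kind of condition, and the bound is simpler than you state. Cauchy--Schwarz together with your splitting gives
\[
|\mathbf a_l^H\mathbf P_\mathcal U\mathbf a_m|\le\min\Big\{\sqrt{g_lg_m},\ |\mathbf a_l^H\mathbf a_m|+\sqrt{(1-g_l)(1-g_m)}\Big\}.
\]
Hence the $(l,m)$ cross term is negligible as soon as one of the two paths is essentially missed ($g\approx0$) or, using Assumption~\ref{ass1}, essentially captured ($g\approx1$). The error is therefore governed by the inter-path correlations together with how sharply each $g_l$ sits near $0$ or $1$, not by the correlations alone.

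This condition is plausible for the DFT-beam subspaces of Type-I and Type-II in the resolvable regime, but it fails for PSC. There $g_l=N_p/N_t$ for every path, so the diagonal sum does not depend on which ports are chosen. The cross terms then carry essentially all of the gain of picking the strongest ports, which is consistent with the paper stating the PSC corollary exactly in the port domain. Nor does the condition hold automatically for a learned $\mathbf C_p$.

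Finally, your fallback is weaker than it looks. The identity $\mathbb E[\alpha_l^*\alpha_m\,\mathbf a_l^H\mathbf P_\mathcal U\mathbf a_m]=0$ requires $\mathcal U$ to be independent of the path phases. However, $\mathcal S_{\rm II}$ is selected from $\mathbf h$, and $\mathbf C_p=\Psi(\mathbf r_{\mathbf B})$ depends on $|\mathbf B^H\mathbf h|^2$, which involves the relative path phases. Even where the fallback applies, it yields an in-mean statement rather than the per-realization approximation of the lemma.
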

\vspace{-0.3cm}
\begin{IEEEproof}
For a beamformer restricted to $\mathcal U$, the MRT direction is the normalized projection of $\mathbf h$ onto $\mathcal U$, which yields $\hat{\mathbf{w}}_\mathcal{U}=\mathbf{P}_\mathcal{U}\mathbf h/\|\mathbf{P}_\mathcal{U}\mathbf h\|$. Thus,
\begin{align}
	|\mathbf h^H\hat{\mathbf w}_\mathcal{U}|^2
	&=\left|\frac{\mathbf h^H\mathbf P_\mathcal U\mathbf h}{\|\mathbf P_\mathcal U\mathbf h\|}\right|^2
	=\|\mathbf P_\mathcal U\mathbf h\|^2 \\
	&=\boldsymbol{\alpha}^H\mathbf A^H\mathbf P_\mathcal U\mathbf A\boldsymbol{\alpha}.
\end{align}
Under \textbf{Assumption~\ref{ass1}}, the cross-path correlations are weak and the off-diagonal terms in $\mathbf A^H\mathbf P_\mathcal U\mathbf A$ are negligible, giving $\|\mathbf{P}_\mathcal{U}\mathbf{h}\|^2 \approx \sum_{l=1}^{L}|\alpha_l|^2\|\mathbf{P}_\mathcal{U}\mathbf{a}(u_l)\|^2$. Dividing this expression by $\|\mathbf h\|^2\approx\sum_{l=1}^{L}|\alpha_l|^2$ completes the proof.
\end{IEEEproof}
\vspace{-0.2cm}
\begin{theorem}[\emph{CSI-capture efficiency}] \label{efficiency} \normalfont
	Under \textbf{Lemma~\ref{theo1}} and \textbf{Assumption~\ref{ass1}}, the CSI-capture efficiency satisfies
	\begin{equation}
		\eta \triangleq \frac{\|\mathbf{P}_\mathcal{U}\mathbf{h}\|^2}{\|\mathbf{h}\|^2}
		\approx
		\frac{\sum_{l=1}^{L}|\alpha_l|^2 g_l}{\sum_{l=1}^{L}|\alpha_l|^2},
	\end{equation}
	where $g_l\triangleq\|\mathbf{P}_\mathcal{U}\mathbf{a}(u_l)\|^2\in[0,1]$ denotes the projected gain of the $l$-th path.
\end{theorem}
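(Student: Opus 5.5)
The plan is to compute the numerator and the denominator of $\eta$ separately, each in the path domain, and then take their ratio. Both computations rest on the same approximation from \textbf{Assumption~\ref{ass1}}.

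For the numerator, I invoke \textbf{Lemma~\ref{theo1}} directly. It gives $\|\mathbf P_\mathcal U\mathbf h\|^2=\boldsymbol\alpha^H\mathbf A^H\mathbf P_\mathcal U\mathbf A\boldsymbol\alpha$. Under \textbf{Assumption~\ref{ass1}}, the off-diagonal entries of this quadratic form are negligible, so it is approximately $\sum_l|\alpha_l|^2 g_l$ with $g_l=\|\mathbf P_\mathcal U\mathbf a(u_l)\|^2$.

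Next I verify that $g_l\in[0,1]$. Since $\mathbf P_\mathcal U$ is an orthogonal projector, it is Hermitian and idempotent. Hence
\begin{equation}
g_l=\mathbf a(u_l)^H\mathbf P_\mathcal U\mathbf a(u_l)\le\|\mathbf a(u_l)\|^2 .
\end{equation}
The normalized steering vector has $N_t$ entries of modulus $1/\sqrt{N_t}$, so $\|\mathbf a(u_l)\|^2=1$. Nonnegativity is immediate because $g_l$ is a squared norm.

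For the denominator, I expand in the same way:
\begin{equation}
\|\mathbf h\|^2=\boldsymbol\alpha^H\mathbf A^H\mathbf A\boldsymbol\alpha\approx\boldsymbol\alpha^H\boldsymbol\alpha=\sum_l|\alpha_l|^2 .
\end{equation}
This uses $\mathbf A^H\mathbf A\approx\mathbf I_L$ from \textbf{Assumption~\ref{ass1}}. Dividing the two approximations gives the stated expression for $\eta$. I will also note that $\eta\in[0,1]$ holds exactly, since $\|\mathbf P_\mathcal U\mathbf h\|\le\|\mathbf h\|$.

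The main obstacle is justifying that the ratio of the two approximations is itself a good approximation. I will bound the off-diagonal terms in both forms, after splitting each Gram matrix into its diagonal and off-diagonal parts. Let $\epsilon=\max_{l\ne m}|\mathbf a(u_l)^H\mathbf a(u_m)|$. Because $\mathbf P_\mathcal U$ is a contraction, Cauchy--Schwarz gives $|\mathbf a(u_l)^H\mathbf P_\mathcal U\mathbf a(u_m)|\le\sqrt{g_lg_m}\le 1$. However, this bound does not shrink with $\epsilon$, so it is not sufficient on its own.

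To handle this, I will follow the lemma's convention and treat the cross-path terms as negligible in the regime where the subspace either nearly contains or nearly excludes each path. In that regime the cross-path projected correlations are controlled by $\epsilon$. Then both the numerator and the denominator carry an error of order $\epsilon(L-1)\sum_l|\alpha_l|^2$. The relative error of the ratio is then $O(L\epsilon)$, which vanishes as the paths become resolvable.
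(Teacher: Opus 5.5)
Your proof is essentially the paper's: the paper's proof of Lemma~\ref{theo1} drops the off-diagonal entries of $\mathbf{A}^H\mathbf{P}_\mathcal{U}\mathbf{A}$ to get the numerator and divides by $\|\mathbf{h}\|^2\approx\sum_{l=1}^{L}|\alpha_l|^2$ (from $\mathbf{A}^H\mathbf{A}\approx\mathbf{I}_L$), and Theorem~\ref{efficiency} is then said to follow directly. Your additional steps are sound refinements that the paper omits: the check $g_l\in[0,1]$, and the observation that $|\mathbf{a}^H(u_l)\mathbf{P}_\mathcal{U}\mathbf{a}(u_m)|$ is not small under Assumption~\ref{ass1} alone. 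That cross term is bounded by $\min\{\sqrt{g_lg_m},\,\epsilon+\sqrt{(1-g_l)(1-g_m)}\}$, so it is small only in your near-containment/near-exclusion regime, which the paper assumes tacitly; note also that your bounds give an absolute error $O(L\epsilon)$ in $\eta$, which is a relative error only when $\eta$ is bounded away from zero.
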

\vspace{-0.3cm}
\begin{IEEEproof}
	The result follows directly from \textbf{Lemma~\ref{theo1}}.
\end{IEEEproof}
\textbf{Lemma~\ref{theo1}} and \textbf{Theorem~\ref{efficiency}} show that the performance of a limited-feedback scheme is governed by how efficiently its induced representation subspace captures channel energy. The path-capture factor $g_l$ measures the fraction of the $l$-th path energy preserved by the scheme, and the overall CSI-capture efficiency is a path-power-weighted average of these factors. Therefore, problem (\ref{eq:unified_problem}) can be approximately rewritten as
\begin{equation} \label{eq:unified_problem2}
	\max_{\mathbf{C},\,\mathcal{Q},\,\mathcal{F}}\; \left( {1 - \frac{{{T_o}}}{{{T_c}}}} \right){\log _2}\left(1+\rho \sum_{l=1}^{L} |\alpha_l|^2 g_{l}\right).
\end{equation}
Problem (\ref{eq:unified_problem}) and (\ref{eq:unified_problem2}) are both difficult to solve directly because the CSI acquisition, compression, and reconstruction mappings are coupled. Nevertheless, the subspace-projection viewpoint reveals a useful design principle: \emph{a limited-feedback scheme should preserve as much dominant channel energy as possible while minimizing CSI-RS and feedback overhead.} The following sections use this viewpoint to reinterpret standardized mechanisms and to motivate the proposed SSI-enhanced design.
\vspace{-0.1cm}
\section{Conventional Limited-Feedback Schemes} \label{sec:conventional}
Current 3GPP NR standards do not solve problem (\ref{eq:unified_problem}) in a direct optimization sense. Instead, they approximate this design objective through standardized limited-feedback schemes, including Type-I, Type-II, and PSC \cite{3GPP38214,Fu2023TutorialCodebooks}. Within the unified framework developed above, these methods define practical implementations of the UE-side compression mapping $\mathcal{Q}$ and the BS-side reconstruction mapping $\mathcal{F}$ under explicit signaling and complexity constraints. Their main differences lie in the representation granularity of CSI, the resulting feedback payload, and hence, the beamforming overhead-performance tradeoff. In this section, we reinterpret these mechanisms through the unified optimization viewpoint established in Section \ref{sec:framework}.
\vspace{-0.3cm}
\subsection{Type-I Feedback}
The Type-I feedback serves as the simplest and lightest-weight limited-feedback beamforming mechanism in current NR systems, offering low implementation complexity and feedback overhead at the cost of relatively low spectral efficiency. 
\subsubsection{Characterization using the Proposed Framework}
Within the unified formulation in problem (\ref{eq:unified_problem}), the Type-I feedback can be viewed as an extremely coarse CSI-quantization scheme. Concretely, since the BS does not know the UE-specific dominant directions a priori, whereas the UE requires sufficiently comprehensive channel observations to perform reliable codebook-based beam selection, non-precoded CSI-RS, i.e., $\mathbf{C}_{\rm I} = \mathbf{I}_{N_t}$, is first adopted to probe all effective transmit ports in a direction-agnostic manner \footnote{Typically, each port is linked to multiple physical antennas, allowing for enhanced signal quality and increased capacity in the communication system. However, we assume the number of CSI-RS ports equals the number of transmit antennas here for analytical simplicity.}. \\
\indent After receiving CSI-RS, the UE obtains the perfect DL CSI $\mathbf{h}$ as assumed previously. To reduce CSI feedback overhead, the BS and UE share a predefined structured codebook $\mathbf{D}$ to quantize the channel, which is typically an oversampled discrete Fourier transform (DFT) codebook. For a ULA with $N_t$ antennas, the quantization codebook is of dimension $N_t \times N_t O_D$ with oversampling factor $O_D$. The $m$-th column of $\mathbf{D}$ can be constructed as
\begin{equation}
\mathbf{d}_m = \frac{1}{\sqrt{N_t}} \left[1, e^{j 2\pi\frac{ m}{N_tO_D}}, \ldots, e^{j2\pi \frac{ m (N_t-1)}{N_tO_D}} \right]^T,
\end{equation}
where $m=0,1,\cdots, N_tO_D-1$. In the Type-I feedback, the UE selects the beam that best matches the estimated channel from $\mathbf{D}$. The corresponding index is then fed back as the precoding matrix indicator (PMI) to the BS \footnote{Other CSI quantities such as the rank indicator and channel quality indicator are also fed back in this process to support multi-layer transmission and modulation-and-coding selection. However, these quantities are beyond the scope of this paper and are not considered further.}. In this manner, the UE-side CSI compression mapping can be expressed as
\begin{equation}
	z_{\rm I} = \mathcal{Q}_{\rm I}(\mathbf{h})= \argmax_{m=1,\cdots,N_tO_D} \left|\mathbf{h}^H\mathbf{d}_m\right|^2.
\end{equation}
Note that quantization of $z_{\rm I}$ in the transmission process is not considered in this paper. \\
\indent After receiving $z_{\rm I}$, the BS reconstructs the CSI from the PMI using the predefined oversampled DFT codebook $\mathbf{D}$ as
\begin{equation}
	\hat{\mathbf{h}}_{\rm I} = \mathcal{F}_{\rm I}(z_{\rm I}) = \mathbf{d}_{z_{\rm I}}.
\end{equation}
Type-I feedback then defines a one-dimensional subspace $\mathcal U_{\rm I}=\mathrm{span}(\mathbf d_{z_{\rm I}})$ with projector $\mathbf{P}_{\rm I}=\mathbf{d}_{z_{\rm I}}\mathbf{d}_{z_{\rm I}}^H$. As a consequence, the optimal Type-I beamforming vector is given by
\begin{equation}
	\hat{\mathbf{w}}_{\rm I} = \frac{\mathcal{F}_{\rm I}\big(\mathcal{Q}_{\rm I}(\mathcal{R}(\mathbf{y}_{\text{CSI}} (\mathbf{I}_{N_t})))\big)}{\|\mathcal{F}_{\rm I}\big(\mathcal{Q}_{\rm I}(\mathcal{R}(\mathbf{y}_{\text{CSI}} (\mathbf{I}_{N_t})))\big)\|}=\mathbf{d}_{z_{\rm I}},
\end{equation}
This expression shows that Type-I scheme realizes limited-feedback beamforming by quantizing the spatial channel with a DFT basis and selecting a single direction that best matches the channel. Hence, the Type-I scheme primarily beams toward one dominant direction on the predefined grid. This makes Type-I feedback a low-complexity and low-overhead feasible realization of (\ref{eq:unified_problem}), but its achievable rate is constrained by the codebook resolution and the rank-one representation. The resulting beamforming loss can be pronounced when the actual channel direction is mismatched with the predefined grid, or when the channel energy is spread over multiple propagation paths. Type-I feedback therefore serves as a baseline beam-management mechanism in current NR systems and is particularly suitable for scenarios with extremely limited feedback capability.

\subsubsection{CSI-Capture Efficiency Analysis}
The Type-I feedback is equivalent to choosing the projector that captures the largest channel energy. Accordingly, the achievable rate for Type-I feedback is given by
\begin{align}
R_{\rm I} &= \log_2\left(1+\rho\left| \mathbf{h}^H\hat{\mathbf{w}}_{\rm I}\right|^2\right) = \log_2\left(1+\rho\|\mathbf{P}_{\rm I}\mathbf{h}\|^2\right) \notag \\ &= {\log _2}\left( {1 + \rho {{\left| {{{\bf{h}}^H}{\bf{d}}_{z_{\rm I}}} \right|}^2}} \right) = {\log _2}\left( {1 + \rho {{\left| {{\boldsymbol{\alpha} ^H}{{\bf{A}}^H}{\bf{d}}_{z_{\rm I}}} \right|}^2}} \right) \notag \\ &\overset{(a)}{\approx} {\log _2}\left( {1 + \rho {{\left| \alpha_{l_{\rm I}} \right|}^2}{{\left| {{{\bf{a}}^H}\left( {u}_{l_{\rm I}} \right){\bf{d}}_{z_{\rm I}}} \right|}^2}} \right),
\end{align}
where $l_{\rm I} \triangleq \arg\max_l |\alpha_l|^2$ denotes the strongest path. The approximation in step $(a)$ is based on \textbf{Assumption~\ref{ass1}}. More particularly, for Type-I feedback, it has $g_l \approx 0, \forall l \neq l_{\rm I}$ in (\ref{eq:unified_problem2}) and 
\begin{align}
	g_{l_{\rm I}} &= {{\left| {{{\bf{a}}^H}\left( {u}_{l_{\rm I}} \right){\bf{d}}_{z_{\rm I}}} \right|}^2} = {\left| \frac{1}{N_t} \sum_{n=0}^{N_t-1} e^{j2\pi n\left(u_{l_{\rm I}}-\frac{{z_{\rm I}}}{N_tO_D}\right)} \right|}^2 \notag \\ &= \frac{1}{N_t^2} \left|\frac{\sin\!\left(\pi N_t \Delta_{l,m}\right)} {\sin\!\left(\pi \Delta_{l,m}\right)} \right|^2,
\end{align}
where $\Delta_{{l_{\rm I}},{z_{\rm I}}} = u_{l_{\rm I}} - \frac{{z_{\rm I}}}{N_tO_D}$. For the oversampled DFT codebook, we have $|\Delta_{{l_{\rm I}},{z_{\rm I}}}| \le \frac{1}{2N_tO_D}$, which yields $g_{l_{\rm I}} \in [g_{\min},1]$ with 
\begin{equation}
	g_{\min} = \frac{1}{N_t^2} \left|\frac{\sin\!\left(\frac{\pi}{2O_D}\right)}{\sin\!\left(\frac{\pi}{2N_tO_D}\right)} \right|^2.
\end{equation} 
Overall, the spectral efficiency of Type-I feedback can be approximated as
\begin{equation}
	R_{\rm I} \approx {\log _2}\left( {1 + \rho {{\left| \alpha_{l_{\rm I}} \right|}^2}g_{l_{\rm I}}} \right).
\end{equation}
\indent Based on the above analysis and the results in \textbf{Theorem \ref{efficiency}}, the CSI-capture efficiency for Type-I feedback method is given in the following corollary.
\begin{corollary}[\emph{Type-I CSI-capture efficiency}] \label{lemma1} \normalfont
Under \textbf{Assumption~\ref{ass1}}, the CSI-capture efficiency of Type-I feedback can be approximated as
\begin{equation}
\eta_{\rm I} = \frac{\|\mathbf{P}_{\rm I} \mathbf{h}\|^2}{\|\mathbf{h}\|^2} \approx \frac{|\alpha_{l_{\rm I}}|^2 g_{l_{\rm I}}}{\sum_{l=1}^{L}|\alpha_l|^2}.
\end{equation}
\end{corollary}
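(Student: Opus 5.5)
The plan is to specialize Theorem~\ref{efficiency} to the Type-I subspace $\mathcal U_{\rm I}=\mathrm{span}(\mathbf d_{z_{\rm I}})$ and then show that, among the path-capture factors $g_l$, only the one of the strongest path survives. Since $\|\mathbf d_{z_{\rm I}}\|=1$, the projector is $\mathbf P_{\rm I}=\mathbf d_{z_{\rm I}}\mathbf d_{z_{\rm I}}^H$, and the exact numerator is
\begin{equation*}
\|\mathbf P_{\rm I}\mathbf h\|^2=|\mathbf h^H\mathbf d_{z_{\rm I}}|^2 .
\end{equation*}
Substituting $\mathbf h=\mathbf A\boldsymbol\alpha$ gives a quadratic form. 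Invoking Lemma~\ref{theo1} under Assumption~\ref{ass1}, I discard the cross-path terms and obtain
\begin{equation*}
\|\mathbf P_{\rm I}\mathbf h\|^2\approx\sum_{l=1}^{L}|\alpha_l|^2 g_l,\qquad g_l=|\mathbf a^H(u_l)\mathbf d_{z_{\rm I}}|^2 .
\end{equation*}
For the denominator, Assumption~\ref{ass1} gives $\|\mathbf h\|^2=\boldsymbol\alpha^H\mathbf A^H\mathbf A\boldsymbol\alpha\approx\sum_l|\alpha_l|^2$.

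The second step is to show $g_l\approx 0$ for every $l\neq l_{\rm I}$, which reduces the sum to the single term $|\alpha_{l_{\rm I}}|^2 g_{l_{\rm I}}$. I would argue this in two parts.

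\emph{Selection.} Under the approximate decomposition, the selection rule $\mathcal Q_{\rm I}$ maximizes $\sum_l|\alpha_l|^2|\mathbf a^H(u_l)\mathbf d_m|^2$ over $m$. Because $|\mathbf a^H(u)\mathbf d_m|^2$ is a Fej\'er-type kernel concentrated within $\pm 1/N_t$ of $m/(N_tO_D)$, and the dominant spatial frequencies are separated, each codeword essentially captures at most one path. The maximizer is therefore the grid point nearest $u_{l_{\rm I}}$, which gives $|\Delta_{l_{\rm I},z_{\rm I}}|\le 1/(2N_tO_D)$ and $g_{l_{\rm I}}\in[g_{\min},1]$, consistent with the preceding computation.

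\emph{Leakage.} For $l\neq l_{\rm I}$, the separation $|u_l-z_{\rm I}/(N_tO_D)|$ is of order at least $1/N_t$, so the Dirichlet-kernel expression bounds $g_l$ by $O\big(1/(N_t^2\sin^2(\pi\Delta))\big)$, which is negligible for a large aperture.

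Combining the reduced numerator with the approximate denominator yields $\eta_{\rm I}\approx|\alpha_{l_{\rm I}}|^2g_{l_{\rm I}}/\sum_l|\alpha_l|^2$.

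The main obstacle is the selection part of the second step. The argmax selects the codeword maximizing total captured energy, not the one aligned with the strongest path. If two paths have comparable power and a grid offset penalizes the stronger one, the selected beam could lock onto a weaker path. I would handle this by noting that the claim is only an approximation under Assumption~\ref{ass1}, and by stating the implicit condition that $|\alpha_{l_{\rm I}}|^2 g_{\min}$ exceeds $\max_{l\neq l_{\rm I}}|\alpha_l|^2$ up to leakage terms. Alternatively, one can interpret $l_{\rm I}$ as the path captured by $\mathbf d_{z_{\rm I}}$, in which case the formula holds with that relabeling.
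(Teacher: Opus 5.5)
Your proposal is correct and follows the paper's route: specialize Theorem~\ref{efficiency} to the rank-one projector $\mathbf d_{z_{\rm I}}\mathbf d_{z_{\rm I}}^H$, drop the cross-path terms and use $\|\mathbf h\|^2\approx\sum_l|\alpha_l|^2$ under Assumption~\ref{ass1}, and keep only $g_{l_{\rm I}}\in[g_{\min},1]$ (from the Dirichlet-kernel grid-mismatch bound) with $g_l\approx 0$ for $l\neq l_{\rm I}$. Your selection caveat is a genuine refinement that the paper leaves implicit, since the paper simply defines $l_{\rm I}$ as the strongest path and asserts that the other $g_l$ vanish; one correction is that your leakage bound $1/(N_t^2\sin^2(\pi\Delta))$ is small only when the separation spans many beamwidths, $N_t|\Delta|\gg 1$ (which is what the ``sufficiently separated'' clause of Assumption~\ref{ass1} supplies), not merely when it is of order $1/N_t$.
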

\textbf{Corollary~\ref{lemma1}} shows that the loss of Type-I feedback is caused by two factors: the energy outside the single selected dominant path and the DFT-grid mismatch captured by $g_{l_{\rm I}}$. The first factor limits Type-I feedback in multi-path channels, while the second motivates richer or learned representation bases beyond single-beam quantization.
\vspace{-0.3cm}
\subsection{Type-II Feedback}
To overcome the coarse spatial quantization of Type-I feedback, Type-II scheme provides a more flexible and higher-resolution beam-domain representation of the channel. 
\subsubsection{Characterization using the Proposed Framework}
As in Type-I feedback, non-precoded CSI-RS is still adopted so that the UE can probe the full-port DL CSI space, i.e., $\mathbf{C}_{\rm II} = \mathbf{I}_{N_t}$. After acquiring the perfect DL CSI, the UE no longer selects only a single beam. Instead, it reports a small set of beams from a predefined oversampled DFT dictionary together with their combining coefficients. Consequently, Type-II feedback retains a CSI-RS overhead on the same order as Type-I, while incurring a higher feedback cost because both beam indices and coefficients must be conveyed. \\
\indent Concretely, the UE selects $Q$ dominant spatial directions from the oversampled DFT codebook $\mathbf{D}$ based on the DL CSI. Supposing $\mathcal S \subseteq \{0,\ldots,N_tO_D-1\}$ and $|\mathcal S|=Q$, the corresponding beam subspace is constructed as $\mathcal{U}_{\mathcal S} = \mathrm{span}(\mathbf{D}_{\mathcal S}) = \mathrm{span}([\mathbf{d}_m]_{m\in\mathcal S})$. Therefore, the UE determines the optimal beam indices by solving
\begin{equation}
\mathcal S_{\rm II} = \arg\max_{\mathcal S} \left\| \mathbf P_{\mathcal{U}_{\mathcal S}} \mathbf h \right\|^2,
\end{equation}
where $\mathbf P_{\mathcal{U}_{\mathcal S}} = \mathbf{D}_\mathcal S(\mathbf{D}_\mathcal S^H \mathbf{D}_\mathcal S)^{-1}\mathbf{D}_\mathcal S^H$ is the orthogonal projector onto the subspace spanned by the selected beams. The CSI representation subspace defined by Type-II feedback is therefore $\mathcal{U}_{\rm II} = \mathrm{span}(\mathbf{D}_{\mathcal S_{\rm II}})$. The UE then estimates the combining coefficients by minimizing the compression error
\begin{equation}
	\boldsymbol{\alpha}_{\rm II} = \arg\min_{\boldsymbol{\alpha}} \|\mathbf{h} - \mathbf{D}_{\mathcal S_{\rm II}} \boldsymbol{\alpha}\|^2=\mathbf{D}_{\mathcal S_{\rm II}}^{\dagger}\mathbf{h}.
\end{equation}
Therefore, the UE-side compression mapping can be expressed as
\begin{equation}
\mathbf{z}_{\rm II} = \mathcal{Q}_{\rm II}(\mathbf{h}) = \left(\mathcal{S}_{\rm II}, \boldsymbol{\alpha}_{\rm II}\right).
\end{equation}
More particularly, the UE reports both the selected beam indices and the corresponding combining coefficients to the BS. After receiving $\mathbf{z}_{\rm II}$, the BS reconstructs the DL CSI via
\begin{equation}
\hat{\mathbf{h}}_{\rm II} = \mathcal{F}_{\rm II}(\mathbf{z}_{\rm II}) = \mathbf{D}_{\mathcal S_{\rm II}}\boldsymbol{\alpha}_{\rm II} = \mathbf{D}_{\mathcal S_{\rm II}}\mathbf{D}_{\mathcal S_{\rm II}}^{\dagger}\mathbf{h} = \mathbf{P}_{\rm II}\mathbf{h},
\end{equation}
which is the projection of the actual channel onto the subspace spanned by the selected directions. The resulting Type-II beamformer is
\begin{equation}
\hat{\mathbf{w}}_{\rm II} = \frac{\mathcal{F}_{\rm II}\big(\mathcal{Q}_{\rm II}(\mathcal{R}(\mathbf{y}_{\text{CSI}} (\mathbf{I}_{N_t})))\big)}{\|\mathcal{F}_{\rm II}\big(\mathcal{Q}_{\rm II}(\mathcal{R}(\mathbf{y}_{\text{CSI}} (\mathbf{I}_{N_t})))\big)\|}=\frac{\mathbf{P}_{\rm II}\mathbf{h}}{\|\mathbf{P}_{\rm II}\mathbf{h}\|},
\end{equation}
Compared with Type-I feedback, the feasible beamforming set is no longer restricted to a single codeword but to the subspace spanned by a set of selected beams. This provides greater representation flexibility and usually improves the achievable beamforming gain in multipath channels. However, the UE must determine both the dominant beam directions and the corresponding coefficients based on the estimated high-dimensional channel. Both beam selection and Moore--Penrose pseudoinverse calculation impose a high computational burden on the UE and require a larger feedback payload.

\subsubsection{CSI-Capture Efficiency Analysis}
The achievable rate for Type-II feedback is given by
\begin{align}
	R_{\rm II} &= \log_2\left(1+\rho\left| \mathbf{h}^H\hat{\mathbf{w}}_{\rm II}\right|^2\right) = \log_2\left(1+\rho\left|\frac{\mathbf{h}^H\mathbf{P}_{\rm II}\mathbf{h}} {\|\mathbf{P}_{\rm II}\mathbf{h}\|}\right|^2\right) \notag\\ &= \log_2\left(1+\rho\|\mathbf{P}_{\rm II}\mathbf{h}\|^2\right) = \log_2\left(1+\rho\,\boldsymbol{\alpha}^H\mathbf{A}^H\mathbf{P}_{\rm II}\mathbf{A}\boldsymbol{\alpha}\right) \notag\\ &\overset{(b)}{\approx} \log_2\left(1+\rho\sum_{l\in\mathcal L_{Q,\rm II}}|\alpha_l|^2 g_{l,\rm II}\right),
\end{align}
where $\mathcal L_{Q,\rm II}$ denotes the set of dominant paths captured by the selected Type-II subspace. The approximation in (b) is again based on \textbf{Assumption~\ref{ass1}}. Thus, the path-capture factor satisfies $g_{l,\rm II}\approx 0$ for $l\notin\mathcal L_{Q,\rm II}$, whereas for captured paths it is given by
\begin{align}
	g_{l,\rm II} &\triangleq \|\mathbf{P}_{\rm II}\mathbf{a}(u_l)\|^2 \approx \sum_{m\in\mathcal S_{\rm II}} \left|\mathbf{a}^H(u_l)\mathbf{d}_m\right|^2 \notag\\ &= \sum_{m\in\mathcal S_{\rm II}} \frac{1}{N_t^2} \left| \frac{\sin\!\left(\pi N_t\Delta_{l,m}\right)} {\sin\!\left(\pi \Delta_{l,m}\right)} \right|^2.
\end{align}
For additional intuition, we further have
\begin{equation}
	g_{l,\rm II} \gtrsim \left|\mathbf{a}^H(u_l)\mathbf{d}_{m_l^\star}\right|^2 = g_{l_{\rm I}},
\end{equation}
where $m_l^\star=\arg\min_{m\in\mathcal S_{\rm II}}\left|u_l-\frac{m}{N_tO_D}\right|$ denotes the selected DFT atom closest to the $l$-th path. The resulting CSI-capture efficiency of the Type-II feedback is given below.
\begin{corollary}[\emph{Type-II CSI-capture efficiency}] \label{lemma2} \normalfont
Under \textbf{Assumption~\ref{ass1}}, the CSI-capture efficiency of Type-II feedback can be approximated as
\begin{equation}
\eta_{\rm II} = \frac{\|\mathbf{P}_{\rm II} \mathbf{h}\|^2}{\|\mathbf{h}\|^2} \approx \frac{\sum_{l\in\mathcal L_{Q,\rm II}}|\alpha_l|^2 g_{l,\rm II}}{\sum_{l=1}^{L}|\alpha_l|^2}.
\end{equation}
\end{corollary}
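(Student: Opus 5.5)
The corollary is a direct specialization of \textbf{Theorem~\ref{efficiency}} to the Type-II representation subspace $\mathcal U_{\rm II}=\mathrm{span}(\mathbf D_{\mathcal S_{\rm II}})$. The plan is to check that Type-II falls within the scope of \textbf{Lemma~\ref{theo1}}, evaluate the path-capture factors $g_l$ for this particular subspace, and substitute them into the efficiency expression.

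\emph{Step 1: Type-II fits \textbf{Lemma~\ref{theo1}}.} The BS-side reconstruction is $\hat{\mathbf h}_{\rm II}=\mathbf D_{\mathcal S_{\rm II}}\mathbf D_{\mathcal S_{\rm II}}^{\dagger}\mathbf h=\mathbf P_{\rm II}\mathbf h$. Since $\mathbf D_{\mathcal S_{\rm II}}$ has full column rank (its $Q\le N_t$ columns are distinct DFT atoms), we have $\mathbf D_{\mathcal S_{\rm II}}\mathbf D_{\mathcal S_{\rm II}}^{\dagger}=\mathbf P_{\mathcal U_{\rm II}}$. The reconstructed CSI is therefore restricted to $\mathcal U_{\rm II}$, and the Type-II beamformer $\hat{\mathbf w}_{\rm II}$ coincides with the subspace MRT $\hat{\mathbf w}_{\mathcal U}$ of the lemma.

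\emph{Step 2: apply \textbf{Theorem~\ref{efficiency}}.} This gives
\begin{equation*}
\eta_{\rm II}\approx\frac{\sum_{l=1}^{L}|\alpha_l|^2 g_{l,\rm II}}{\sum_{l=1}^{L}|\alpha_l|^2},\qquad g_{l,\rm II}=\|\mathbf P_{\rm II}\mathbf a(u_l)\|^2 .
\end{equation*}
The denominator uses $\|\mathbf h\|^2=\boldsymbol\alpha^H\mathbf A^H\mathbf A\boldsymbol\alpha\approx\|\boldsymbol\alpha\|^2$, which follows from \textbf{Assumption~\ref{ass1}}.

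\emph{Step 3: restrict the sum to captured paths.} Partition the path indices into $\mathcal L_{Q,\rm II}$ and its complement. For $l\notin\mathcal L_{Q,\rm II}$, no selected atom lies near $u_l$. Every term $|\mathbf a^H(u_l)\mathbf d_m|^2$ with $m\in\mathcal S_{\rm II}$ is then a Dirichlet-kernel sidelobe value. This value is small under the resolvability/large-aperture part of \textbf{Assumption~\ref{ass1}}, so $g_{l,\rm II}\approx 0$ and these terms drop from the numerator. This is the same argument already used in step (b) of the rate derivation. The result is the stated expression.

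\emph{Main obstacle.} The delicate point is justifying $g_{l,\rm II}\approx 0$ for uncaptured paths, together with dropping the cross terms in $\mathbf A^H\mathbf P_{\rm II}\mathbf A$. Both are approximations, not identities.

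I would first make the cross-term bound explicit. By Cauchy--Schwarz, $|\mathbf a^H(u_l)\mathbf P_{\rm II}\mathbf a(u_k)|\le\sqrt{g_{l,\rm II}\,g_{k,\rm II}}$, so any cross term involving an uncaptured path is negligible whenever its $g$ is. For two captured paths, the cross term equals $\mathbf a^H(u_l)\mathbf a(u_k)$ minus the component of $\mathbf P_{\rm II}^{\perp}\mathbf a(u_l)$ paired with $\mathbf a(u_k)$. This quantity is small by near-orthogonality whenever the projection leaves little residual energy.

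I would then state the corollary as holding up to these $o(1)$ terms in the large-$N_t$ regime, which is consistent with the paper's use of "$\approx$" throughout.
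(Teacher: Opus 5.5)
Your proposal is correct and follows essentially the paper's route. The paper writes $\|\mathbf{P}_{\rm II}\mathbf{h}\|^2=\boldsymbol{\alpha}^H\mathbf{A}^H\mathbf{P}_{\rm II}\mathbf{A}\boldsymbol{\alpha}$, invokes Assumption~\ref{ass1} to drop cross terms and to set $g_{l,\rm II}\approx 0$ for $l\notin\mathcal{L}_{Q,\rm II}$ (step (b) of the $R_{\rm II}$ derivation), and normalizes via Theorem~\ref{efficiency} with $\|\mathbf{h}\|^2\approx\sum_{l}|\alpha_l|^2$; your Cauchy--Schwarz control of the cross terms only sharpens what the paper simply asserts.
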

\textbf{Corollary~\ref{lemma2}} shows that Type-II feedback improves over single-beam quantization by allowing several selected dictionary beams to jointly capture multipath energy. Its remaining loss comes from the energy outside the selected subspace and from DFT-basis mismatch, while its cost is the UE-side search and coefficient-estimation burden required to construct this subspace from full-dimensional CSI.

\subsection{Port-Selection Feedback}
To reduce the high feedback overhead and UE-side computational complexity of the flexible Type-II feedback while retaining a richer channel representation than the single-beam Type-I feedback, the PSC was introduced as a more structured limited-feedback mechanism. Instead of representing the channel through a large beam-domain dictionary, PSC restricts the feedback to a selected subset of effective ports and the associated low-dimensional coefficients.
\subsubsection{Characterization using the Proposed Framework}
As in Type-I and Type-II feedback, PSC still relies on non-precoded CSI-RS to expose the full CSI-RS port space to the UE, i.e., $\mathbf{C}_{\rm PSC}=\mathbf{I}_{N_t}$. After acquiring the full-port CSI $\mathbf{h}$, the UE selects a subset of $N_p$ effective ports that preserve as much channel energy as possible. Let $\mathcal{S}\subseteq\{1,\ldots,N_t\}, |\mathcal{S}|=N_p$ denote the selected effective-port set, and let
\begin{equation}
\mathbf{E}_{\mathcal S} = [\mathbf{e}_{p}]_{p\in\mathcal{S}} \in \mathbb{C}^{N_t \times N_p}
\end{equation}
be the associated selection matrix, where $\mathbf{e}_{p}$ is the $p$-th canonical basis vector. The UE selects the effective-port subset that best captures the channel energy, which can be formulated as
\begin{equation}
\mathcal S_{\rm PSC} = \arg\max_{\mathcal S} \left\| \mathbf P_{\mathcal S} \mathbf h \right\|^2,
\end{equation}
where $\mathbf P_{\mathcal S}=\mathbf E_{\mathcal S}(\mathbf E_{\mathcal S}^H\mathbf E_{\mathcal S})^{-1}\mathbf E_{\mathcal S}^H=\mathbf E_{\mathcal S}\mathbf E_{\mathcal S}^H$ is the orthogonal projector onto the port-selected subspace. This is consistent with selecting the strongest $N_p$ effective ports. The CSI representation subspace defined by PSC is therefore $\mathcal{U}_{\rm PSC} = \mathrm{span}(\mathbf{E}_{\mathcal S_{\rm PSC}})$. \\
\indent The UE then estimates the low-dimensional coefficient vector over the selected subspace by solving the same LS problem as in Type-II feedback, which yields
\begin{equation}
\boldsymbol{\alpha}_{\rm PSC} = \arg\min_{\boldsymbol{\alpha}} \|\mathbf{h} - \mathbf{E}_{\rm PSC} \boldsymbol{\alpha}\|^2 = \mathbf{E}_{\rm PSC}^{\dagger}\mathbf{h} = \mathbf{E}_{\rm PSC}^{H}\mathbf{h}.
\end{equation}
A key distinction from Type-II feedback is that the coefficients are obtained by selecting entries from the estimated CSI rather than by performing Moore--Penrose pseudoinverse calculation over a beam-domain dictionary, which significantly reduces the UE-side computational complexity. Then, the UE reports
\begin{equation}
\mathbf{z}_{\rm PSC} = \mathcal{Q}_{\rm PSC}(\mathbf{h}) = \left(\mathcal{S}_{\rm PSC}, \boldsymbol{\alpha}_{\rm PSC}\right).
\end{equation}
Correspondingly, the BS recovers the CSI as
\begin{equation}
\hat{\mathbf{h}}_{\rm PSC} = \mathcal{F}_{\rm PSC}(\mathbf{z}_{\rm PSC}) = \mathbf{E}_{\rm PSC}\boldsymbol{\alpha}_{\rm PSC} = \mathbf{P}_{\rm PSC}\mathbf{h},
\end{equation}
which leads to the beamformer
\begin{equation}
\hat{\mathbf{w}}_{\rm PSC} = \frac{\mathcal{F}_{\rm PSC}\big(\mathcal{Q}_{\rm PSC}(\mathcal{R}(\mathbf{y}_{\text{CSI}} (\mathbf{I}_{N_t})))\big)}{\|\mathcal{F}_{\rm PSC}\big(\mathcal{Q}_{\rm PSC}(\mathcal{R}(\mathbf{y}_{\text{CSI}} (\mathbf{I}_{N_t})))\big)\|}=\frac{\mathbf{P}_{\rm PSC}\mathbf{h}}{\|\mathbf{P}_{\rm PSC}\mathbf{h}\|}.
\end{equation}
Under the unified framework, PSC offers a more structured representation, a smaller search space, and lower implementation complexity than Type-II feedback.

\subsubsection{CSI-Capture Efficiency Analysis}
Let $|\tilde h|_{(1)}^2\ge\cdots\ge|\tilde h|_{(N_c)}^2$ denote the sorted effective-port powers. The PSC projector captures the energy of the selected effective ports. Therefore, before accounting for CSI-RS and feedback overhead, the instantaneous rate associated with PSC beamforming is
\begin{align}
R_{\rm PSC} &= \log_2\left(1+\rho\left| \mathbf{h}^H\hat{\mathbf{w}}_{\rm PSC}\right|^2\right) = \log_2\left(1+\rho\|\mathbf{P}_{\rm PSC}\mathbf{h}\|^2\right) \notag \\
& = \log_2\left(1+\rho \sum_{i=1}^{N_p} |\tilde h|_{(i)}^2\right).
\end{align}
\begin{corollary}[\emph{PSC CSI-capture efficiency}] \label{lemma3} \normalfont
For a selected effective-port set of size $N_p$, the CSI-capture efficiency of PSC is
\begin{equation}
\eta_{\rm PSC} = \frac{\|\mathbf{P}_{\rm PSC}\mathbf{h}\|^2}{\|\mathbf{h}\|^2} = \frac{\sum_{i=1}^{N_p} |\tilde h|_{(i)}^2}{\|\mathbf{h}\|^2}.
\end{equation}
\end{corollary}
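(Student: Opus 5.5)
The identity is exact; unlike Corollaries~\ref{lemma1} and~\ref{lemma2}, it does not rely on \textbf{Assumption~\ref{ass1}}. I would start from the definition of $\eta$ in \textbf{Theorem~\ref{efficiency}} with $\mathcal{U}=\mathcal{U}_{\rm PSC}$. The task then reduces to evaluating $\|\mathbf{P}_{\rm PSC}\mathbf{h}\|^2$ in closed form. The key structural fact is that $\mathbf{E}_{\mathcal S}$ has orthonormal canonical columns, so $\mathbf{E}_{\mathcal S}^H\mathbf{E}_{\mathcal S}=\mathbf{I}_{N_p}$. This gives $\mathbf{P}_{\mathcal S}=\mathbf{E}_{\mathcal S}\mathbf{E}_{\mathcal S}^H=\sum_{p\in\mathcal S}\mathbf{e}_p\mathbf{e}_p^H$, which is a diagonal $0/1$ matrix.

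First, I would write $\|\mathbf{P}_{\mathcal S}\mathbf{h}\|^2=\mathbf{h}^H\mathbf{P}_{\mathcal S}\mathbf{h}=\sum_{p\in\mathcal S}|h_p|^2$. This uses idempotence and Hermitian symmetry of the projector, as in the proof of \textbf{Lemma~\ref{theo1}}. Since $\mathbf{C}_{\rm PSC}=\mathbf{I}_{N_t}$, the effective-port powers $|\tilde h_p|^2$ coincide with the entry powers $|h_p|^2$, and $N_c=N_t$.

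Second, I would show that $\mathcal S_{\rm PSC}=\arg\max_{|\mathcal S|=N_p}\sum_{p\in\mathcal S}|h_p|^2$ is attained by the indices of the $N_p$ largest entries. The argument is an exchange step: if some selected index had smaller power than some unselected one, swapping them would not decrease the sum. Hence the optimal value equals $\sum_{i=1}^{N_p}|\tilde h|_{(i)}^2$. If there are ties, the maximizer need not be unique, but the value is unchanged.

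Third, dividing by $\|\mathbf{h}\|^2$, assumed nonzero, gives the stated $\eta_{\rm PSC}$. It also confirms the rate expression $R_{\rm PSC}$ displayed before the corollary. There is no real obstacle; the only points needing care are the identification of effective ports with antenna entries under non-precoded CSI-RS and the top-$N_p$ exchange argument.
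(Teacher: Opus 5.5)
Your proposal is correct and follows the paper's own route. The paper uses $\mathbf{P}_{\mathcal S}=\mathbf{E}_{\mathcal S}\mathbf{E}_{\mathcal S}^H$ (orthonormal canonical columns), notes that the energy-maximizing selection picks the strongest $N_p$ effective ports, reads off $\|\mathbf{P}_{\rm PSC}\mathbf{h}\|^2=\sum_{i=1}^{N_p}|\tilde h|_{(i)}^2$, and normalizes by $\|\mathbf{h}\|^2$; your exchange argument, the remark on ties, and the identification of effective ports with antenna entries under $\mathbf{C}_{\rm PSC}=\mathbf{I}_{N_t}$ simply make explicit the step the paper states in one line.
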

\textbf{Corollary~\ref{lemma3}} indicates that PSC is governed by the concentration of the effective-port-domain channel. It can be efficient when a small number of ports dominate, but it may lose substantial energy when the effective-port powers are spread more uniformly. Unlike a global ordering among Type-I, Type-II, and PSC, this interpretation is explicitly channel- and representation-dependent.

\section{Site-Specific Type-II Feedback} \label{sec:proposed}
\indent The above subspace-projection view also clarifies the main limitation of conventional feedback mechanisms. To achieve high-quality beamforming, especially in Type-II feedback, the UE must perform several costly online tasks, including high-dimensional CSI acquisition, dominant-subspace identification, and coefficient calculation within the selected subspace. Consequently, improving the subspace representation capability usually comes at the cost of substantially higher CSI acquisition, feedback, and UE-side computation overhead. \\
\indent To overcome this limitation, we propose a site-specific Type-II feedback scheme with two components: BS-side SSI-conditioned subspace inference and UE-side low-dimensional coefficient feedback. In particular, the BS first infers a UE-dependent dominant beam subspace from site-specific knowledge, and the UE then only refines the low-dimensional coefficients within that subspace. In this way, the proposed scheme reduces UE-side overhead while preserving the rich subspace representation capability of Type-II feedback.
\vspace{-0.4cm}
\subsection{Site-Specific Subspace Inference}
Compared with conventional Type-II feedback, the key idea of the proposed feedback scheme is to infer the dominant transmit subspace at the BS before explicit UE-side CSI estimation and feedback. To make this possible, the BS must rely on a lightweight channel-dependent observation that is available before the CSI-RS stage. The RSRP vector $\mathbf{r}_{\mathbf{B}}$ acquired during SSB probing serves this purpose. \\ \indent Specifically, the RSRP vector captures how the channel energy is distributed over a set of coarse probing directions. Although this RSRP vector is insufficient to recover the full instantaneous CSI because phase information is unavailable, it provides a beam-domain power fingerprint that is strongly related to the dominant angular support of the channel \cite{Heng2022SiteSpecificProbing,Ning2023RSRPCodebook,sim}. In site-specific environments with relatively stable propagation geometry and a limited number of dominant scattering clusters, similar dominant propagation directions tend to induce similar RSRP patterns. Accordingly, the RSRP vector provides a suitable low-overhead input for BS-side prediction of a candidate dominant beam subspace. \\
\indent To this end, we suppose that the BS infers a UE-dependent low-dimensional channel subspace through a mapping
\begin{equation}
\mathbf{C}_p = \Psi(\mathbf{r}_{\mathbf{B}}) = [\mathbf{c}_{p,1},\cdots,\mathbf{c}_{p,Q}] \in \mathbb{C}^{N_t \times Q},
\end{equation}
where $Q \ll N_t$ and the columns of $\mathbf{C}_p$ are expected to align with the dominant propagation directions of the current UE. Without loss of generality, we assume that $\mathbf{C}_p$ is an orthonormal basis of the inferred subspace, since any linearly independent basis can be transformed into an equivalent orthonormal basis via QR decomposition without changing its span. \\
\indent In this way, the dominant transmit directions are no longer searched exhaustively at the UE as in conventional Type-II, but are instead inferred at the BS by conditioning the offline learned SSI on the RSRP fingerprint. The remaining online feedback task is therefore reduced to refining the instantaneous low-dimensional coefficients over the inferred subspace.
\vspace{-0.3cm}
\subsection{Low-Dimensional Coefficient Feedback}
After the BS determines $\mathbf{C}_p$, it transmits beamformed CSI-RS only over this inferred subspace. The received CSI-RS at the UE is then
\begin{equation} \label{proposed CSI-RS}
	\mathbf{y}_{\text{CSI}}^T(\mathbf{C}_p) = \sqrt{P_{\text{CSI}}} \mathbf{h}^H \mathbf{C}_p \mathbf{S}_{\text{CSI}} + \mathbf{n}_{\text{CSI}}^T,
\end{equation}
where $\mathbf{S}_{\text{CSI}} \in \mathbb{C}^{Q \times L_c}$. Compared with the conventional limited-feedback schemes, the CSI-RS dimension is reduced from $N_t$ to $Q$, which significantly reduces the CSI acquisition overhead from $\mathcal{O}(N_t^2)$ to $\mathcal{O}(Q^2)$. In this case, the UE observes the effective channel over the inferred subspace
\begin{equation}
	\mathbf{h}_p = \mathbf{C}_p^H \mathbf{h}.
\end{equation}
We still assume that the UE can perfectly recover $\mathbf{h}_p$ from the received CSI-RS, i.e., $\mathbf{h}_p = \mathcal{R}_p\big(\mathbf{y}_{\text{CSI}}(\mathbf{C}_p)\big)$. \\
\indent Therefore, unlike conventional Type-II feedback, the UE no longer needs to either determine the dominant beam directions or estimate the coefficients by solving an optimization problem from the full-dimensional channel estimate. Instead, it can directly report the estimated $\mathbf{h}_p$, which already provides a low-dimensional representation of the DL CSI, to the BS. In this sense, the UE-side CSI compression mapping $\mathcal{Q}_p$ reduces to an identity mapping over the inferred low-dimensional channel domain:
\begin{equation} \label{zp}
	\mathbf{z}_p=\mathcal{Q}_p(\mathbf{h}_p)=\boldsymbol{\alpha}_p=\mathbf{h}_p.
\end{equation}
Accordingly, after receiving $\mathbf{z}_p$, the BS reconstructs the channel representation according to
\begin{equation} \label{hp}
	\hat{\mathbf{h}}_p=\mathcal{F}_p(\mathbf{z}_p;\mathbf{C}_p)=\mathbf{C}_p\mathbf{z}_p=\mathbf{C}_p\mathbf{h}_p=\mathbf{C}_p\mathbf{C}_p^H\mathbf{h}.
\end{equation}
In this process, the CSI representation subspace coincides with the CSI-RS precoder, i.e., $\mathbf{U}_p = \mathbf{C}_p$. The corresponding orthogonal projector is $\mathbf{P}_p = \mathbf{C}_p \mathbf{C}_p^H$. The beamforming vector is then obtained by applying MRT to the reconstructed channel as
\begin{equation} \label{wp}
\hat{\mathbf{w}}_p = \frac{\mathcal{F}_p\big(\mathcal{Q}_p(\mathcal{R}_p(\mathbf{y}_{\text{CSI}} (\mathbf{C}_p)))\big)}{\|\mathcal{F}_p\big(\mathcal{Q}_p(\mathcal{R}_p(\mathbf{y}_{\text{CSI}} (\mathbf{C}_p)))\big)\|}=\frac{\mathbf{P}_p \mathbf{h}}{\|\mathbf{P}_p \mathbf{h}\|}.
\end{equation}
In this way, the proposed scheme remains compatible with the existing feedback pipeline while shifting both the high-dimensional dominant-subspace discovery task and the following coefficient retrieval burden from the UE to the BS through site-specific subspace inference.
\vspace{-0.3cm}
\subsection{CSI-Capture Efficiency Analysis}
The achievable rate of the proposed design is given by
\begin{equation}
	R_p = \log_2\left(1+\rho\left| \mathbf{h}^H\hat{\mathbf{w}}_p \right|^2\right) = \log_2\left(1+\rho\|\mathbf{P}_p\mathbf{h}\|^2\right),
\end{equation}
Suppose that there exists an oracle $Q$-dimensional subspace $\mathcal{U}_Q^\star$ that captures the most channel energy among all $Q$-dimensional subspaces, i.e., 
\begin{equation}
	\mathcal{U}_Q^\star = \arg\max_{\mathcal{U}:\dim(\mathcal{U})=Q} \|\mathbf{P}_{\mathcal{U}}\mathbf{h}\|^2.
\end{equation}
The corresponding orthogonal projector is denoted by $\mathbf{P}_Q^\star$. Here, $\mathcal{U}_Q^\star$ is used only as an analytical benchmark, since such a subspace is practically inaccessible at the BS. For example, when $Q=1$, $\mathcal{U}_Q^\star=\mathrm{span}(\mathbf{h})$ is an oracle subspace, but the BS cannot infer $\mathbf{h}$ from the RSRP vector. Alternatively, this oracle subspace may be interpreted as the subspace that best aligns with the dominant propagation directions. In particular, the oracle subspace can be spanned by the array response vectors of the $Q$ strongest paths, i.e., $\mathcal{U}_Q^\star = \mathrm{span}([\mathbf{a}(u_l)]_{l\in\mathcal L_Q^\star})$, where $\mathcal L_Q^\star$ denotes the set of indices of the $Q$ strongest paths. The captured channel energy is therefore given by
\begin{equation}
	\|\mathbf{P}_Q^\star \mathbf{h}\|^2	= \boldsymbol{\alpha}^H\mathbf{A}^H\mathbf{P}_Q^\star \mathbf{A}\boldsymbol{\alpha}\approx	\sum_{l \in \mathcal L_Q^\star}|\alpha_l|^2 g_{l,Q^\star},
\end{equation}
where
\begin{equation}
	g_{l,Q^\star} \triangleq \|\mathbf{P}_Q^\star \mathbf{a}(u_l)\|^2 \approx 1, \forall l \in \mathcal L_Q^\star
\end{equation}
denotes the corresponding oracle path-capture factor. In this case, the top-$Q$ paths are fully captured by the oracle subspace without the grid mismatch induced by the DFT basis in Type-I/II. Note that this oracle subspace is determined by the specific application and is typically unavailable to the BS. It is used only for theoretical analysis.
\begin{theorem}[\emph{Subspace-mismatch bound for site-specific Type-II feedback}] \label{theo2} \normalfont
Let $\mathbf{P}_p$ be the projector onto the subspace inferred by the proposed SSI-enhanced feedback scheme, and let $\mathbf{P}_Q^\star$ denote the projector onto an oracle $Q$-dimensional benchmark subspace. Define $\delta_p\triangleq\|\mathbf{P}_p-\mathbf{P}_Q^\star\|_2$ and $[x]_+\triangleq\max\{x,0\}$. Then, the proposed scheme satisfies
\begin{equation}
	\|\mathbf{P}_p\mathbf{h}\|^2 \ge \left[\|\mathbf{P}_Q^\star\mathbf{h}\|^2-\delta_p\|\mathbf{h}\|^2\right]_+,
\end{equation}
and its achievable rate is lower bounded by
\begin{equation}
	R_p \ge \log_2\left(1+\rho\left[\|\mathbf{P}_Q^\star\mathbf{h}\|^2-\delta_p\|\mathbf{h}\|^2\right]_+\right).
\end{equation}
Moreover, the CSI-capture efficiency satisfies
\begin{equation}
	\eta_p = \frac{\|\mathbf{P}_p \mathbf{h}\|^2}{\|\mathbf{h}\|^2} \ge \left[\frac{\|\mathbf{P}_Q^\star\mathbf{h}\|^2}{\|\mathbf{h}\|^2} - \delta_p\right]_+.
\end{equation}
\end{theorem}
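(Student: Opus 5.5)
The plan is to compare the two quadratic forms $\mathbf{h}^H\mathbf{P}_p\mathbf{h}$ and $\mathbf{h}^H\mathbf{P}_Q^\star\mathbf{h}$ directly. Since both $\mathbf{P}_p=\mathbf{C}_p\mathbf{C}_p^H$ (with $\mathbf{C}_p$ orthonormal, as assumed after the definition of $\Psi$) and $\mathbf{P}_Q^\star$ are Hermitian idempotent, we have $\|\mathbf{P}_p\mathbf{h}\|^2=\mathbf{h}^H\mathbf{P}_p^H\mathbf{P}_p\mathbf{h}=\mathbf{h}^H\mathbf{P}_p\mathbf{h}$, and similarly for $\mathbf{P}_Q^\star$. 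This reduces the statement to a perturbation estimate for a Hermitian quadratic form.

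Writing $\mathbf{P}_p=\mathbf{P}_Q^\star+(\mathbf{P}_p-\mathbf{P}_Q^\star)$ gives
\begin{equation*}
\|\mathbf{P}_p\mathbf{h}\|^2=\|\mathbf{P}_Q^\star\mathbf{h}\|^2+\mathbf{h}^H(\mathbf{P}_p-\mathbf{P}_Q^\star)\mathbf{h}.
\end{equation*}
The difference $\mathbf{P}_p-\mathbf{P}_Q^\star$ is Hermitian, so its quadratic form is real. By the Rayleigh quotient (equivalently Cauchy--Schwarz with the operator norm), $\mathbf{h}^H(\mathbf{P}_p-\mathbf{P}_Q^\star)\mathbf{h}\ge-\|\mathbf{P}_p-\mathbf{P}_Q^\star\|_2\|\mathbf{h}\|^2=-\delta_p\|\mathbf{h}\|^2$. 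This yields $\|\mathbf{P}_p\mathbf{h}\|^2\ge\|\mathbf{P}_Q^\star\mathbf{h}\|^2-\delta_p\|\mathbf{h}\|^2$. Since $\|\mathbf{P}_p\mathbf{h}\|^2\ge0$ trivially, the maximum with zero gives the $[\cdot]_+$ form.

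For the rate bound, I would invoke Lemma~\ref{theo1} (or the displayed expression for $R_p$) to write $R_p=\log_2(1+\rho\|\mathbf{P}_p\mathbf{h}\|^2)$. Because $\log_2(1+\rho x)$ is monotonically increasing in $x\ge0$, the energy bound transfers directly. The efficiency bound follows by dividing by $\|\mathbf{h}\|^2>0$, using $[x]_+/c=[x/c]_+$ for $c>0$. If $\mathbf{h}=\mathbf{0}$, $\eta_p$ is undefined and this case is excluded.

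There is no real obstacle; the only points needing care are the idempotence that turns the norm into a quadratic form and the sign handling of the Hermitian difference. The bound can be loose because $\delta_p\le1$ only, but that does not affect its validity. The result also does not rely on Assumption~\ref{ass1}.
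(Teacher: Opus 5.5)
Your proposal is correct and follows essentially the same route as the paper. The paper also writes $\|\mathbf{P}_p\mathbf{h}\|^2=\|\mathbf{P}_Q^\star\mathbf{h}\|^2+\mathbf{h}^H(\mathbf{P}_p-\mathbf{P}_Q^\star)\mathbf{h}$ because both are Hermitian projectors, bounds the second term below by $-\delta_p\|\mathbf{h}\|^2$ via the Rayleigh quotient, uses nonnegativity to get the $[\cdot]_+$ form, and then substitutes into $R_p$ and normalizes by $\|\mathbf{h}\|^2$; your remarks on idempotence, monotonicity of $\log_2(1+\rho x)$, and excluding $\mathbf{h}=\mathbf{0}$ just spell out steps the paper leaves implicit.
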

\vspace{-0.1cm}
\begin{IEEEproof}
Since both $\mathbf{P}_p$ and $\mathbf{P}_Q^\star$ are Hermitian projectors,
\begin{equation}
	\|\mathbf{P}_p\mathbf{h}\|^2 = \|\mathbf{P}_Q^\star\mathbf{h}\|^2 + \mathbf{h}^H(\mathbf{P}_p-\mathbf{P}_Q^\star)\mathbf{h}.
\end{equation}
The Rayleigh quotient bound gives $\mathbf{h}^H(\mathbf{P}_p-\mathbf{P}_Q^\star)\mathbf{h}\ge -\|\mathbf{P}_p-\mathbf{P}_Q^\star\|_2\|\mathbf h\|^2$. Combining this inequality with the non-negativity of $\|\mathbf{P}_p\mathbf h\|^2$ yields the energy bound. Substituting it into $R_p=\log_2(1+\rho\|\mathbf{P}_p\mathbf h\|^2)$ and normalizing by $\|\mathbf h\|^2$ complete the proof.
\end{IEEEproof}
\indent \textbf{Theorem~\ref{theo2}} separates the performance of the proposed feedback scheme into two interpretable terms. The first term, $\|\mathbf P_Q^\star\mathbf h\|^2$, represents the best channel energy that can be preserved by an ideal $Q$-dimensional benchmark subspace. It is therefore a dimension-limited benchmark: it is upper bounded by $\|\mathbf h\|^2$ and becomes equal to the full-channel energy only when the chosen $Q$-dimensional subspace contains the channel direction. The second term, $\delta_p\|\mathbf h\|^2$, quantifies the mismatch between the SSI-inferred subspace and the benchmark oracle subspace. Hence, the theorem shows that the proposed design is limited by both the intrinsic dimension constraint of the reduced CSI-RS subspace and the accuracy with which SSI-conditioned inference identifies that subspace. The overall illustration of the proposed limited-feedback scheme is shown in Fig. \ref{fig:arch}.
\begin{figure}[t]
	\centering
	\includegraphics[scale=0.48]{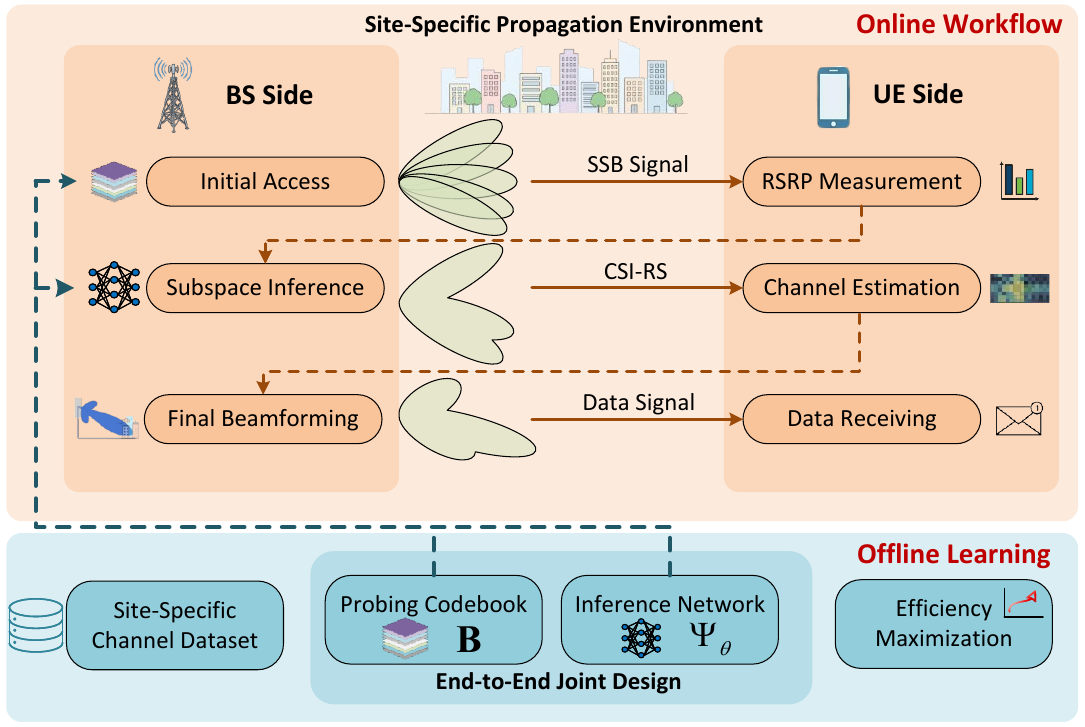}
	\caption{Illustration of the proposed site-specific Type-II feedback}
	\label{fig:arch}
\end{figure}
\vspace{-0.2cm}
\section{Joint Design of SSB Probing and Subspace Inference} \label{sec:realization}
The preceding section describes the proposed site-specific Type-II feedback scheme at the system and signal-processing level, where the site-specific subspace inference is abstracted as a mapping from the SSB probing measurement $\mathbf{r}_{\mathbf{B}}$ to a low-dimensional CSI subspace. The preceding analysis further demonstrates that the effectiveness of the proposed scheme critically depends on the quality of the inferred subspace $\mathbf{C}_p$, which is determined by the inference mapping $\Psi(\cdot)$.
\vspace{-0.45cm}
\subsection{Problem Formulation}
The subspace-mismatch bound above motivates a task-oriented design criterion: the inferred subspace should preserve as large a fraction of the channel energy as possible for the channel distribution of the target site under a $Q$-dimensional subspace bottleneck. However, directly minimizing the oracle mismatch $\delta_p$ is not practical because the oracle projector is unavailable and can be non-unique. We therefore formulate the design in terms of CSI-capture efficiency, which directly measures the relative energy preserved by the inferred subspace. \\
\indent The probing codebook $\mathbf B$ is coupled with this objective because the inferred subspace is computed only from the RSRP fingerprint $\mathbf r_{\mathbf B}$ generated during SSB probing. A generic oversampled DFT probing codebook provides uniform directional sounding, but it does not necessarily produce measurements that are most informative for dominant-subspace inference as analyzed and illustrated in \cite{sim}. Conversely, the information-maximizing probing design in \cite{sim} is not tailored to the subspace-capture objective considered here. Therefore, the probing stage and the BS-side inference mapping should be optimized as a single task-oriented module: the probing codebook should shape the RSRP fingerprint so that the inference mapping can recover a low-dimensional subspace with high CSI-capture efficiency. \\
\indent Accordingly, for a given channel realization, the joint design of the probing codebook $\mathbf{B}$ and the inference mapping $\Psi(\cdot)$ can be formulated as the following optimization problem
\begin{equation} \label{eq:single_eff_obj}
	\max_{\mathbf{B},\,\Psi}\;\eta_p(\mathbf h;\mathbf B,\Psi) = \left\|\Psi(\mathbf r_{\mathbf B})\Psi^H(\mathbf r_{\mathbf B})\mathbf h\right\|^2 / \|\mathbf h\|^2,
\end{equation}
where the inferred basis is assumed to have been orthogonalized before constructing the projector. However, seeking a probing codebook $\mathbf{B}$ and an inference mapping $\Psi(\cdot)$ such that the inferred subspace preserves the channel energy for every possible channel realization is overly stringent for the present low-dimensional probing design. In particular, when $Q<N_t$, satisfying the above condition for all $\mathbf{h}$ would require the probing measurement $\mathbf{r}_{\mathbf{B}}$ to identify, from low-overhead observations, a $Q$-dimensional subspace aligned with arbitrary directions in $\mathbb{C}^{N_t \times 1}$, which is in general impossible. Therefore, rather than pursuing universal optimality over the entire channel space, it is more practical and sufficient for the present problem to optimize the probing codebook and the inference mapping over the site-specific channel set induced by the target propagation environment:
\begin{equation} \label{eq:site_eff_obj}
\max_{\mathbf{B},\,\Psi}\; \mathbb{E}_{\mathbf{h} \sim p_s}\!\left[ \eta_p(\mathbf h;\mathbf B,\Psi) \right],
\end{equation}
where the expectation is taken over the site-specific channel distribution $p_s$.
\vspace{-0.45cm}
\subsection{Learning-based End-to-End Solver}
The above joint design problem remains difficult to solve by conventional optimization techniques. First, the site-specific channel distribution $p_s$ is generally unavailable in closed form and can only be accessed through channel samples collected from the target environment. Second, the probing codebook $\mathbf{B}$ and the inference mapping $\Psi(\cdot)$ are strongly coupled through the nonlinear RSRP measurement process and the resulting subspace projector. Third, practical constraints on the probing codebook, such as beam normalization or phase-only implementation, further render the problem highly nonconvex. Therefore, instead of seeking a closed-form solution, we adopt a task-driven end-to-end learning framework that directly optimizes the probing codebook and the inference mapping from site-specific channel data. \\
\indent Specifically, the probing codebook $\mathbf{B}$ serves as an encoder $f_e$, which maps the input channel to a low-dimensional dB-domain RSRP fingerprint,
\begin{equation} \label{fe}
	\mathbf{r}_{\mathbf{B}}^0 = f_e(\mathbf{h};\mathbf{B})= 10\log_{10}\!\left(P_{\mathrm{SSB}}\left|\mathbf{B}^H\mathbf{h}\right|^2\right),
\end{equation}
where the magnitude square and logarithm are applied elementwise. The intermediate layer then models the measurement uncertainty introduced by shadowing and thermal noise according to (\ref{RSRP}), yielding the observed RSRP vector $\mathbf{r}_{\mathbf{B}}$. The subspace inference mapping is parameterized by a neural network $\Psi_{\theta}(\cdot)$ with trainable parameters $\boldsymbol{\theta}$ and serves as the decoder $f_d$, which takes the noisy RSRP measurement as input and produces the inferred subspace representation,
\begin{equation} \label{fd}
	\mathbf{C}_p = f_d(\mathbf{r}_{\mathbf{B}};\boldsymbol{\theta}) = \Psi_{\theta}(\mathbf{r}_{\mathbf{B}}).
\end{equation}
\indent Let $\mathcal{H}_s=\{\mathbf{h}^{(n)}\}_{n=1}^{N_h}$ denote a set of channel samples collected from the target site. If $\{\mathbf{h}^{(n)}\}_{n=1}^{N_h}$ are i.i.d. samples drawn from the site-specific channel distribution $p_s$, then by the law of large numbers the empirical measure induced by $\mathcal{H}_s$ converges to $p_s$ as $N_h\to\infty$. In this case, the statistical objective in (\ref{eq:site_eff_obj}) is approximated by the empirical average
\begin{equation} \label{sample_average}
\min_{\mathbf{B},\boldsymbol{\theta}}\;\mathcal{L}_{\mathcal{H}_s} = -\frac{1}{N_h}\sum_{n=1}^{N_h} \eta_p\!\left(\mathbf h^{(n)};\mathbf B,\Psi_\theta\right),
\end{equation}
where $\eta_p\!\left(\mathbf h^{(n)};\mathbf B,\Psi_\theta\right)$ is evaluated using the probing measurement $\mathbf{r}_{\mathbf{B}}^{(n)}$ generated from $\mathbf{h}^{(n)}$ under the current probing codebook $\mathbf{B}$. \\
\indent A useful perspective is to view the proposed scheme as the first stage of a hierarchical CSI autoencoding process. Instead of directly reconstructing the instantaneous channel from a learned latent representation, the probing encoder and the subspace decoder first encode and recover a structural representation of the channel, namely, its dominant low-dimensional subspace. In this sense, the site-specific probing stage acts as a coarse channel autoencoder that extracts the slowly-varying spatial structure from the full-dimensional channel and represents it by a compact RSRP fingerprint. The subsequent CSI feedback stage then performs a second-stage low-dimensional refinement by feeding back the instantaneous coefficients within the inferred subspace. Therefore, the proposed framework decomposes CSI acquisition into structural subspace encoding and residual coefficient encoding, which is precisely what enables low-overhead yet effective feedback.
\vspace{-0.5cm}
\subsection{Practical Realization}
Since the probing encoder $\mathbf{B}$ is inherently trainable, it can be directly optimized in the learning process, subject only to the unit-power constraint $\|\mathbf{b}_k\|^2=1$ for each probing beam. The subspace decoder $\Psi_{\theta}$ can be implemented by a multi-layer perceptron (MLP), which has a depth-$D$ fully connected architecture with hidden width $W$, layer normalization (LN) \cite{ba2016layer}, and Gaussian error linear unit (GELU) activation \cite{Hendrycks2016GELU}. Let $\mathbf{f}^{(0)}=\bar{\mathbf{r}}_{\mathbf{B}}^{(n)}$. Then, for $d=1,\ldots,D$, the hidden features are updated as
\begin{equation}
	\mathbf{f}^{(d)} = \mathrm{GELU}\!\left(\mathrm{LN}\!\left(\mathbf{W}_d\mathbf{f}^{(d-1)} + \mathbf{b}_d\right)\right) \in \mathbb{R}^{W \times 1},
\end{equation}
where $\mathrm{GELU}(x) = x \Phi(x)$ and $\Phi(x)$ is the cumulative distribution function of the standard Gaussian distribution. The output layer yields
\begin{equation}
	\mathbf{f}^{(\mathrm{out})} = \mathbf{W}_{\rm out}\mathbf{f}^{(D)} + \mathbf{b}_{\rm out} \in \mathbb{R}^{2N_tQ \times 1}.
\end{equation}
The output vector is then reshaped into the real and imaginary parts of a raw complex $Q$-dimensional subspace representation $\widetilde{\mathbf{C}}_p$. To remove the scale ambiguity of the predicted basis, each column of $\widetilde{\mathbf{C}}_p$ is normalized to unit norm and orthogonalized before constructing the corresponding subspace projector. \\
\indent With the differentiable RSRP measurement model adopted above, the sample-average objective in (\ref{sample_average}) can be optimized with respect to both $\mathbf{B}$ and $\boldsymbol{\theta}$ through backpropagation. However, accurately capturing the site distribution typically requires a large $N_h$, which makes full-batch optimization computationally burdensome. Mini-batch stochastic gradient descent (SGD) \cite{GhadimiLan2013} offers a scalable alternative by using a randomly sampled subset of UEs to form a stochastic approximation of the objective and its gradient. For a mini-batch $\mathcal{B}\subset\mathcal{H}_s$, the empirical training loss is defined as
\begin{equation} \label{loss}
\mathcal{L}_{\mathcal{B}}(\mathbf{B},\boldsymbol{\theta}) = -\frac{1}{|\mathcal{B}|} \sum_{\mathbf{h}^{(n)}\in\mathcal{B}} \eta_p\!\left(\mathbf h^{(n)};\mathbf B,\Psi_\theta\right).
\end{equation}
\indent The overall implementation then follows a two-stage procedure, namely offline training and online deployment. During offline training, the probing codebook $\mathbf{B}$ and the inference network $\Psi_{\theta}(\cdot)$ are optimized jointly over site-specific channel samples, which is summarized in \textbf{Algorithm~\ref{alg:joint_training}}. During online deployment, the learned probing codebook $\mathbf{B}$ and subspace inference mapping $\Psi_{\theta}(\cdot)$ are embedded into the framework introduced in Section \ref{sec:proposed}, whose detailed procedure is summarized in \textbf{Algorithm~\ref{alg:online_deployment}}.
\begin{algorithm}[t]
	\setlength{\textfloatsep}{0.cm}
	\setlength{\floatsep}{0.cm}
	\small
	\renewcommand{\algorithmicrequire}{\textbf{Input}}
	\renewcommand{\algorithmicensure}{\textbf{Output}}
	\caption{Offline Joint Training of the Probing Codebook and the Subspace Inference Mapping}
	\label{alg:joint_training}
	\begin{algorithmic}[1]
		\REQUIRE Site-specific CSI dataset $\mathcal{H}_s$, batch size $B$, step $\beta$, iterations $I$
		\ENSURE Optimized probing codebook and subspace inference mapping $\mathbf{B}^{\star}$ and $\Psi_{\theta^{\star}}$
		\STATE Initialize probing codebook $\mathbf{B}^{(0)}$ and network parameters $\boldsymbol{\theta}^{(0)}$.
		\FOR{$i=1,2,\cdots,I$}
		\STATE Sample a mini-batch of data $\mathcal{B}_i\subset\mathcal{H}_s$ with $|\mathcal{B}_i|=B$.
		\STATE Generate probing RSRP measurements by (\ref{fe}).
		\STATE Contaminate the RSRP measurements by (\ref{RSRP}).
		\STATE Infer the subspace by (\ref{fd}).
		\STATE Compute the loss $\mathcal{L}_{\mathcal{B}_i}(\mathbf{B}^{(i-1)},\boldsymbol{\theta}^{(i-1)})$ by (\ref{loss}).
		\STATE Update $\mathbf{B}$ and $\boldsymbol{\theta}$ by $\mathbf{B}^{(i)}\leftarrow\mathbf{B}^{(i-1)}-\beta\nabla_{\mathbf{B}}\mathcal{L}_{\mathcal B_i}$ and $\boldsymbol{\theta}^{(i)}\leftarrow\boldsymbol{\theta}^{(i-1)}-\beta\nabla_{\boldsymbol{\theta}}\mathcal{L}_{\mathcal B_i}$.
		\ENDFOR
		\STATE The trained probing codebook and inference module are obtained as $\mathbf{B}^{\star} = \mathbf{B}^{(I)}$ and $\Psi_{\theta^{\star}} = \Psi_{\theta}^{(I)}$.
	\end{algorithmic}
\end{algorithm}

\begin{algorithm}[t]
	\setlength{\textfloatsep}{0.cm}
	\setlength{\floatsep}{0.cm}
	\small
	\renewcommand{\algorithmicrequire}{\textbf{Input}}
	\renewcommand{\algorithmicensure}{\textbf{Output}}
	\caption{Online Deployment of the Proposed SSI-Enhanced Type-II Feedback}
	\label{alg:online_deployment}
	\begin{algorithmic}[1]
		\REQUIRE Trained probing codebook and inference mapping $\mathbf{B}^{\star}$ and $\Psi_{\theta^{\star}}$
		\ENSURE Beamforming vector $\hat{\mathbf{w}}_p$
		\STATE BS sweeps the learned probing codebook $\mathbf{B}^{\star}$ with SSB.
		\STATE UE feeds back the measured RSRP vector $\mathbf{r}_{\mathbf{B}}$.
		\STATE BS infers the dominant subspace $\mathbf{C}_p$ by (\ref{fd}).
		\STATE BS transmits low-dimensional beamformed CSI-RS over the inferred subspace by (\ref{proposed CSI-RS}).
		\STATE UE feeds back the estimated effective channel by (\ref{zp}).
		\STATE BS reconstructs the DL CSI by (\ref{hp}) and applies the beamforming vector by (\ref{wp}).
	\end{algorithmic}
\end{algorithm}
\begin{table}[t]
	\centering
	\caption{Summary of Conventional and Proposed Limited-Feedback Schemes}
	\label{tab:scheme_summary}
	\renewcommand{\arraystretch}{1.1}
	\footnotesize
	\resizebox{\columnwidth}{!}{%
		\begin{tabular}{c c c c c}
			\toprule
			Scheme & Subspace $\mathcal U$ & Overhead $T_o$ & UE Complexity & Efficiency $\eta$ \\
			\midrule
			Type-I & $\mathrm{span}(\mathbf d_{\mathbf z_{\rm I}})$ & $N_t+1$ & $\mathcal{O}(N_t^2O_D)$ & $\eta_{\rm I}\!\approx\! \frac{|\alpha_{l_{\rm I}}|^2 g_{l_{\rm I}}} {\sum_{l=1}^{L}|\alpha_l|^2}$ \\
			Type-II & $\mathrm{span}(\mathbf D_{\mathcal S_{\rm II}})$ & $N_t+2Q$ & $\mathcal{O}(N_t^2QO_D+Q^3)$ & $\eta_{\rm II}\!\approx\! \frac{\sum_{l\in\mathcal L_{Q,\rm II}}|\alpha_l|^2 g_{l,\rm II}} {\sum_{l=1}^{L}|\alpha_l|^2}$ \\
			PSC & $\mathrm{span}(\mathbf E_{\mathcal S_{\rm PSC}})$ & $N_t+2N_p$ & $\mathcal{O}(N_t^2)$ & $\eta_{\rm PSC}\!=\! \frac{\|\mathbf{P}_{\rm PSC}\mathbf{h}\|^2}{\|\mathbf{h}\|^2} = \frac{\sum_{i=1}^{N_p} |\tilde h|_{(i)}^2}{\|\mathbf{h}\|^2}$ \\
			Proposed & $\mathrm{span}(\mathbf C_p)$ & $K+2Q$ & $\mathcal{O}(Q^2)$ & $\eta_{\rm p}\!=\! \frac{\|\mathbf P_{\rm p}\mathbf h\|^2}{\|\mathbf h\|^2}$ \\
			\bottomrule
	\end{tabular}}
\end{table}
\subsection{Convergence and Complexity Analysis}
Let $\boldsymbol{\xi}\triangleq(\mathbf{B},\boldsymbol{\theta})$ collect all trainable variables. Suppose each mini-batch $\mathcal{B}_i$ is uniformly sampled from $\mathcal{H}_s$, and define the stochastic gradient
\begin{equation}
	\mathbf{g}_i \triangleq \nabla_{\boldsymbol{\xi}} \mathcal{L}_{\mathcal{B}_i}(\boldsymbol{\xi}_i).
\end{equation}
Assume that $\mathcal{L}_{\mathcal{H}_s}(\boldsymbol{\xi})$ is lower bounded and $L_f$-smooth, and that the stochastic gradient is an unbiased estimator of the full gradient with bounded variance that decreases with the mini-batch size, i.e.,
\begin{equation}
	\mathbb{E}[\mathbf{g}_i]=\nabla_{\boldsymbol{\xi}} \mathcal{L}_{\mathcal{H}_s}(\boldsymbol{\xi}_i), \ 
	\mathbb{E}\!\left[\|\mathbf{g}_i-\nabla_{\boldsymbol{\xi}} \mathcal{L}_{\mathcal{H}_s}(\boldsymbol{\xi}_i)\|^2\right] \le \frac{\sigma^2}{|\mathcal{B}_i|}.
\end{equation}
Then, the mini-batch SGD update $\boldsymbol{\xi}_{i+1} = \boldsymbol{\xi}_i-\gamma_i\mathbf{g}_i$ with a standard diminishing or sufficiently small constant stepsize converges in expectation to a first-order stationary point. In particular, for the usual $\mathcal O(1/\sqrt{I})$ stepsize scaling, the average gradient norm obeys the standard nonconvex SGD rate
\begin{equation}
	\frac{1}{I}\sum_{i=0}^{I-1}\mathbb{E}\!\left[\|\nabla_{\boldsymbol{\xi}} \mathcal{L}_{\mathcal{H}_s}(\boldsymbol{\xi}_i)\|^2\right]	= \mathcal{O}\!\left(\frac{1}{\sqrt{I}}+\frac{\sigma^2}{|\mathcal B|\sqrt{I}}\right).
\end{equation}
This result follows from the standard descent analysis for smooth nonconvex stochastic optimization and shows that the proposed mini-batch joint solver approaches a stationary solution in expectation, with the gradient-variance term reduced by the mini-batch size. \\
\indent We focus on the online deployment complexity, since the learning-based design is trained offline and its cost is amortized over the deployment site. During online deployment, the main complexity of the proposed scheme comes from BS-side subspace inference and UE-side effective channel estimation. For a depth-$D$ MLP with hidden width $W$, the BS-side inference complexity is $\mathcal{O}\!\left(KW + (D-1)W^2 + 2N_tQW\right)$, while the UE-side LS estimation complexity is on the order of $\mathcal{O}(Q^2)$. By contrast, conventional feedback schemes place the dominant CSI-processing burden at the UE. Specifically, the UE-side complexity scales as $\mathcal{O}(N_t^2O_D)$ for Type-I, $\mathcal{O}(N_t^2QO_D+Q^3)$ for Type-II with greedy beam selection, and $\mathcal{O}(N_t^2+N_t\log N_t)$ for PSC, whereas the corresponding BS-side reconstruction complexity is only $\mathcal{O}(N_t)$, $\mathcal{O}(N_tQ)$, and $\mathcal{O}(N_tN_p)$, respectively. \\
\indent Table~\ref{tab:scheme_summary} summarizes the conventional Type-I, Type-II, PSC, and proposed schemes under the unified subspace-projection viewpoint. Since SSB sweeping is shared by all schemes, it is excluded from the overhead comparison, and only the additional SSB-stage reporting required by the proposed scheme is counted. As shown in Table~\ref{tab:scheme_summary}, the proposed site-specific Type-II retains Type-II-like subspace representation capability with only low-dimensional online refinement, thereby achieving a more favorable overhead-complexity-performance tradeoff.

\vspace{-0.5cm}
\section{Numerical Results} \label{sec:simulation}
In this section, numerical results are provided to verify the effectiveness of the proposed end-to-end joint probing-and-inference design and the proposed site-specific Type-II feedback framework based on the DeepMIMO dataset \cite{Alkhateeb2019DeepMIMO}. Key simulation parameters are summarized in Table~\ref{tab:sim_setup_new}, unless otherwise stated. By default, we use the ``asu\_campus\_3p5'' scenario, a 3.5 GHz outdoor deployment with rich multipath, moderate angular spread, and abundant UE samples. We also use the ``O1\_28'', ``O1B\_28'', and ``boston5g\_3p5'' scenarios for additional validation, which correspond to a 28 GHz simple line-of-sight (LoS) outdoor scenario, a 28 GHz blocked non-LoS (NLoS) outdoor scenario, and a 3.5 GHz complex city scenario, respectively.
\begin{table}[t]
	\small
	\centering
	\caption{Simulation settings}
	\label{tab:sim_setup_new}
	\begin{tabular}{ccc}
		\hline
		\textbf{Parameter}       & \textbf{Description}                           & \textbf{Value}         \\ \hline
		$f_c$                        & Carrier frequency                            & 3.5 GHz                      \\
		$\text{BW}$                        & Bandwidth                            & 10 MHz                      \\
		$P_t$                        & Transmit power                            & 40 dBm                     \\
		$K$                        & Size of SSB codebook                            & 8                      \\
		$Q$                        & Dimension of the inferred subspace                            & 4                      \\
		$N_t$                        & Number of BS antennas                             & 64                      \\
		$L_s$                        & Number of SSB symbols                 & 20                     \\
		$T_c$                        & Coherence block length                 & 1000                     \\
		$d$                   & Antenna spacing                 & $\lambda/2$               \\
		$S_n$                        & Noise power spectrum density & -170 dBm/Hz                      \\
		$\sigma_\text{sh}^2$                        & Log-variance of shadowing  & 1 dB                      \\ 
		$D$                        & MLP depth  & 3                      \\ 
		$W$                        & Hidden width  & 256                      \\ 
		\hline
	\end{tabular}
\end{table}
\begin{table}[t]
	\centering
	\caption{CSI-capture efficiency comparison of different probing schemes across representative DeepMIMO scenarios.}
	\label{ablation}
	\renewcommand{\arraystretch}{1.1}
	\setlength{\tabcolsep}{6pt}
	\footnotesize
	\begin{tabular}{lccc}
		\toprule
		Scenario & Proposed & Random & DFT \\
		\midrule
		O1\_28            & \textbf{0.99} & 0.90 & 0.91 \\
		O1B\_28           & \textbf{0.99} & 0.93 & 0.90 \\
		asu\_campus\_3p5  & \textbf{0.89} & 0.62 & 0.73 \\
		boston5g\_3p5     & \textbf{0.97} & 0.81 & 0.86 \\
		\midrule
		Average           & \textbf{0.96} & 0.82 & 0.85 \\
		\bottomrule
	\end{tabular}
\end{table}
\vspace{-0.5cm}
\subsection{Evaluation of the End-to-End Joint Design}
\subsubsection{Convergence Demonstration}
We first demonstrate the convergence of the proposed mini-batch SGD solver for the joint design problem. Fig.~\ref{convergence} illustrates the convergence behavior of the proposed end-to-end solver under three representative $(K,Q)$ settings. The curves show stable convergence for all configurations, and most of the performance gain is achieved within roughly the first $100$ epochs, after which the curves enter a clear saturation region. Moreover, the training and validation curves remain close to each other for all three settings, indicating that the learned probing-and-inference model generalizes well and does not suffer from severe overfitting. In addition, the converged capture efficiency increases consistently as the probing dimension and the inferred subspace dimension become larger. In particular, the $(K,Q)=(16,8)$ configuration achieves the highest efficiency, followed by $(16,4)$ and $(8,4)$, which confirms that both richer probing measurements and a higher-dimensional inferred subspace are beneficial to the proposed framework. 
\begin{figure}[t]
	\centering
	\includegraphics[scale=0.45]{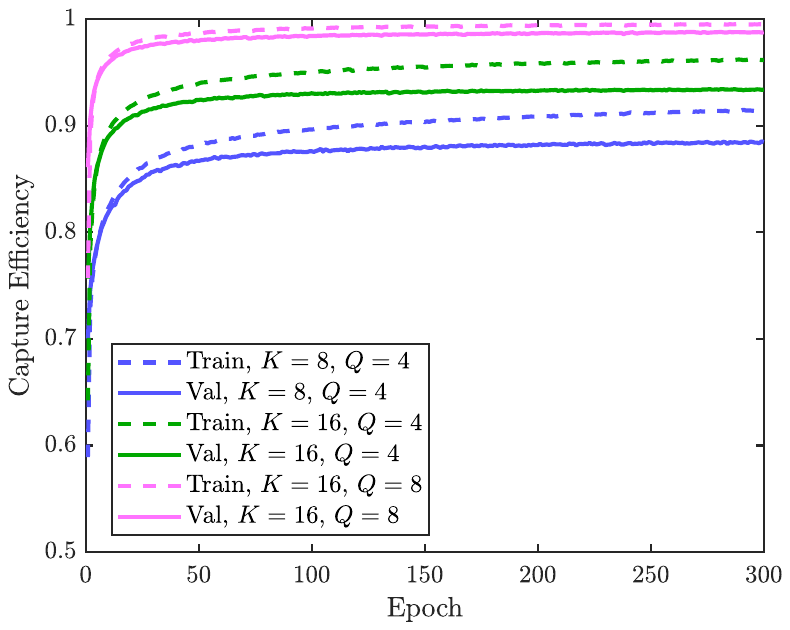}
	\caption{Convergence of the proposed end-to-end design}
	\label{convergence}
\end{figure}
\begin{figure}[t]
	\centering
	\includegraphics[scale=0.45]{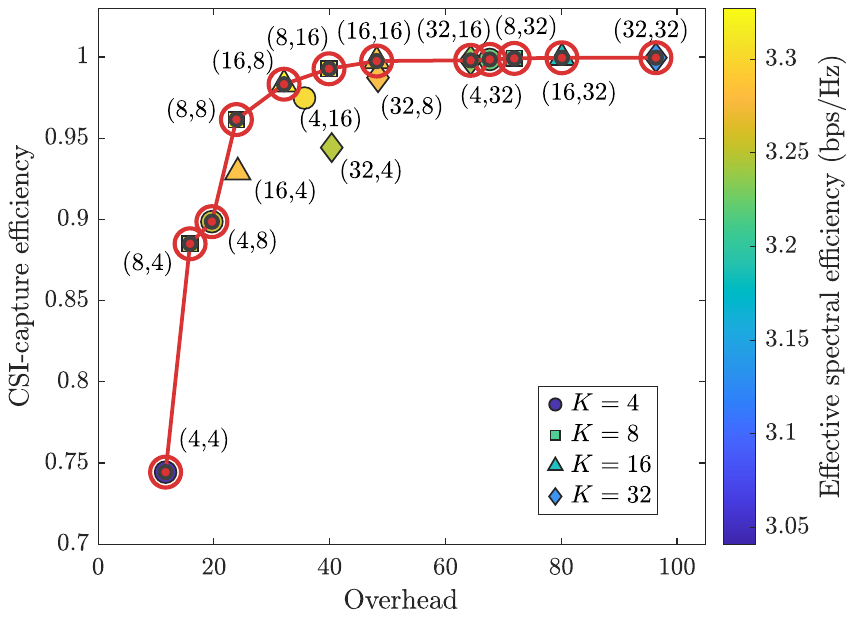}
	\caption{Pareto-optimal overhead-performance illustration}
	\label{tradeoff}
\end{figure}

\subsubsection{Ablation Study of the End-to-End Design}
To validate whether the gain of the proposed end-to-end solver indeed comes from the jointly learned probing encoder, we compare the CSI-capture efficiency of the proposed end-to-end design with two baseline probing modes under the same MLP decoder and training pipeline: i) a fixed random probing codebook, and ii) a fixed DFT probing codebook. Table~\ref{ablation} reports the comparison across four representative DeepMIMO scenarios. The table shows that the proposed scheme consistently achieves the highest capture efficiency in all scenarios. Concretely, in the relatively easier ``O1\_28'' and ``O1B\_28'' scenarios, all methods already achieve high capture efficiencies, while the proposed design still preserves a clear advantage. In more challenging environments, especially ``asu\_campus\_3p5'', the gain becomes substantially larger, indicating that the learned probing codebook is more effective at generating informative fingerprints for downstream subspace inference when the propagation structure is more complex. Moreover, the relative ordering between the random and DFT baselines changes across scenarios, whereas the proposed design remains consistently superior, which further demonstrates its robustness and site-adaptive nature.

\subsubsection{Overhead-Performance Tradeoff}
The previous results already indicate that the probing dimension $K$ and the inferred subspace dimension $Q$ are the two critical design parameters of the proposed framework, as they determine the informativeness of the RSRP fingerprint and the expressiveness of the inferred CSI subspace, respectively. Fig.~\ref{tradeoff} further characterizes the resulting overhead-performance tradeoff in the ``asu\_campus\_3p5'' scenario, where the horizontal axis denotes the total online overhead and the vertical axis denotes the CSI-capture efficiency. The Pareto-optimal operating points are highlighted by the red circles and the connecting curve. The capture efficiency improves rapidly in the low-overhead regime, but soon enters a saturation region as the overhead increases, revealing a pronounced diminishing-return effect. More importantly, the Pareto frontier shows that high performance is achieved only when $K$ and $Q$ are jointly balanced: a small $K$ limits the discriminability of the RSRP fingerprint, while a small $Q$ limits the representational capability of the inferred subspace. Hence, increasing only one of the two dimensions is generally inefficient once the other becomes the bottleneck. This explains why several moderate-overhead configurations already approach the maximum capture efficiency, whereas further enlarging both $K$ and $Q$ brings only marginal gain. In addition, the effective spectral efficiency, indicated by the color map, is not maximized at the largest-overhead points, since the incremental subspace-capture gain is eventually outweighed by the increased overhead penalty.
\vspace{-0.5cm}
\subsection{Evaluation of the Proposed Feedback Scheme}
\begin{table}[t]
	\centering
	\caption{CSI-capture efficiency comparison of different feedback schemes in two representative scenarios.}
	\label{tab:scheme_compare_two_scenarios}
	\renewcommand{\arraystretch}{1.1}
	\setlength{\tabcolsep}{8pt}
	\footnotesize
	\begin{tabular}{lcc}
		\toprule
		Scheme & Boston & O1\_28 \\
		\midrule
		Type-I    & 0.7017 & 0.7405 \\
		Type-II   & \textbf{0.9951} & 0.9992 \\
		PSC       & 0.2606 & 0.2510 \\
		Proposed  & 0.986 & \textbf{0.9999} \\
		\bottomrule
	\end{tabular}
\end{table}
\begin{figure}[t]
	\centering
	\includegraphics[scale=0.45]{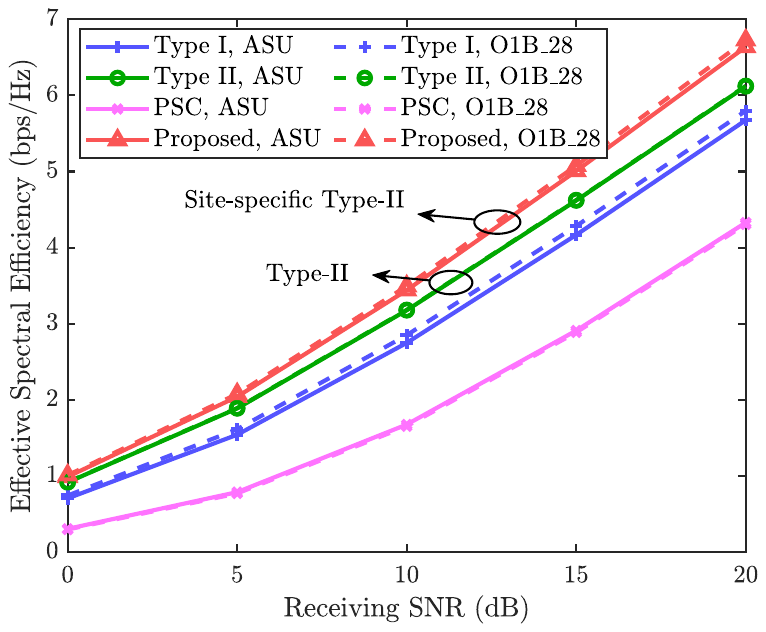}
	\caption{Effective spectral efficiency versus SNR}
	\label{comparison}
\end{figure}
\subsubsection{Scheme Comparison}
Table~\ref{tab:scheme_compare_two_scenarios} and Fig.~\ref{comparison} compare the proposed scheme configured with $(K, Q)=(16,8)$ against the Type-I, Type-II, and PSC baselines in the ``boston5g\_3p5'' and ``O1\_28'' scenarios under the same simulation settings. As expected, the proposed scheme attains a capture efficiency that is very close to, and in some cases even slightly higher than, that of Type-II. More importantly, despite this comparable raw efficiency, the proposed scheme consistently achieves the highest effective spectral efficiency over the entire SNR range. This result directly reflects the main advantage of the proposed SSI-enhanced framework: instead of pursuing marginal gains in instantaneous subspace optimality at the cost of full-dimensional CSI acquisition, it leverages site-specific inference to obtain Type-II-comparable subspace quality with much lower online overhead. Consequently, the proposed scheme converts Type-II-comparable capture efficiency into a strictly better system-level overhead-efficiency tradeoff. In other words, the key gain of the proposed framework comes not from universally outperforming Type-II in raw efficiency, but from matching Type-II in the relevant subspace-capture regime more efficiently.

\subsubsection{UE Performance Statistics}
Fig.~\ref{cdf} shows the cumulative distribution function (CDF) of the effective spectral efficiency of four feedback schemes at an SNR of 10 dB in the ``asu\_campus\_3p5'' and ``O1B\_28'' scenarios. The proposed scheme exhibits the rightmost CDF in both scenarios, indicating that it delivers higher effective spectral efficiency for the majority of users rather than only improving the average performance. This result is consistent with the previous average-rate comparison and further confirms that the gain of the proposed framework is population-wide. In particular, although the proposed scheme and Type-II often achieve comparable raw CSI-capture efficiency, the lower online overhead of the proposed framework shifts its effective-spectral-efficiency distribution consistently to the right of Type-II. Therefore, the benefit of the proposed method does not come from universally dominating Type-II in instantaneous subspace quality, but from realizing Type-II-comparable subspace quality with much lower overhead and converting this advantage into better UE-level effective spectral efficiency. By contrast, Type-I remains limited by its rank-one representation, which leads to a visibly broader and left-shifted distribution, while PSC performs the worst in both scenarios due to its insufficient ability to capture the dominant channel energy in the considered port-domain setting.
\begin{figure}[t]
	\centering
	\includegraphics[scale=0.45]{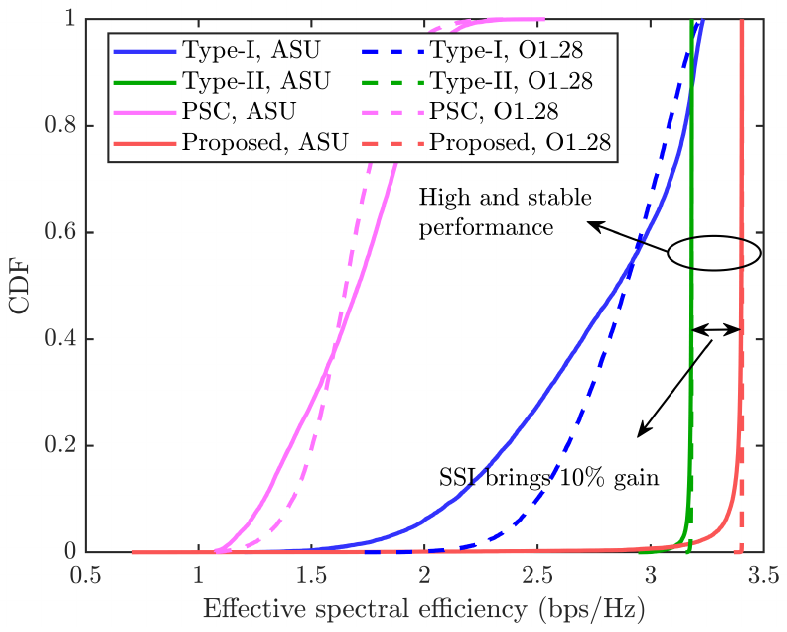}
	\caption{CDF of effective spectral efficiency}
	\label{cdf}	
\end{figure}

\subsubsection{Physical Interpretation of the Inferred Subspace}
Beyond rate and capture metrics, we further visualize the physical meaning of the learned subspace. Specifically, we select three representative test users in the ``asu\_campus\_3p5'' scenario and compare the angular responses of the proposed, Type-I, and Type-II subspace projectors. For each scheme $s\in\{p,\mathrm{I},\mathrm{II}\}$, let $\mathbf{P}_s$ denote the corresponding subspace projector. Its angular response is evaluated over the azimuth angle $\varphi$ as
\begin{equation}
	G_s(\varphi)=\mathbf{a}^H(\varphi)\mathbf{P}_s\mathbf{a}(\varphi).
\end{equation}
For each user, the three responses and path powers are all normalized by their maximum value for visualization.
\begin{figure*}[t]
	\centering
	\subfloat[User 1, max path power = -134.58 dB]{
		\label{phy1}
		\includegraphics[scale=0.31]{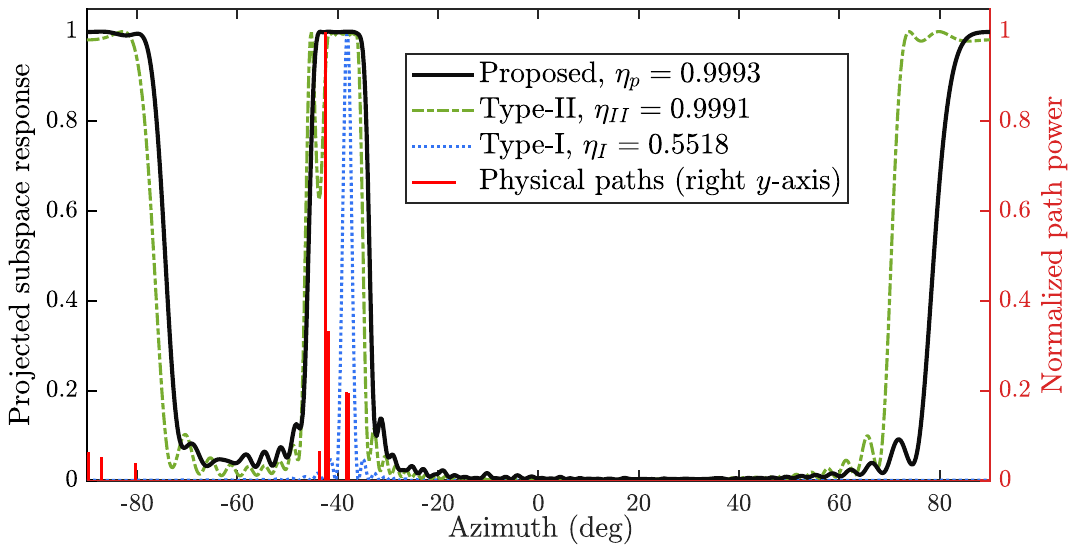}
	}
	\subfloat[User 2, max path power = -105.52 dB]{
		\label{phy2}
		\includegraphics[scale=0.31]{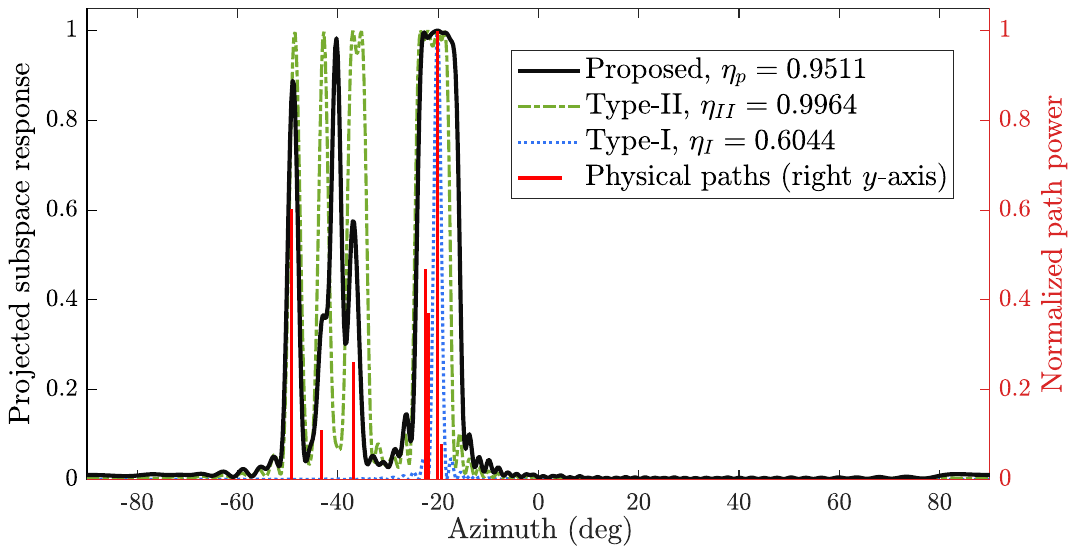}
	}
	\subfloat[User 3, max path power = -132.60 dB]{
		\label{phy3}
		\includegraphics[scale=0.31]{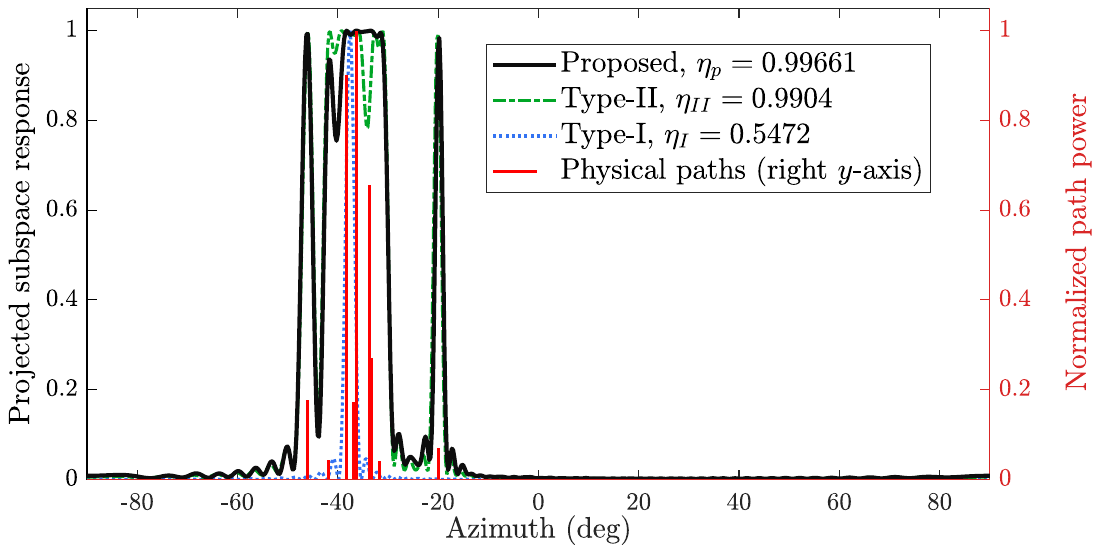}
	}
	\caption{Physical interpretation of the inferred subspace}
	\label{phy}
\end{figure*}

Fig.~\ref{phy} visualizes these angular responses. First, the response of the proposed scheme is consistently concentrated around the dominant physical path clusters, which confirms that the inferred subspace is not an arbitrary latent representation but has a clear geometric interpretation in the angular domain. Second, the proposed response closely resembles that of Type-II in the dominant angular regions, which explains why the proposed scheme achieves CSI-capture efficiency that is close to Type-II in the quantitative results. Third, the Type-I response is much narrower and typically covers only one local direction, which explains its substantially lower capture efficiency. Overall, these figures provide physical evidence that the proposed SSI-enhanced framework succeeds in inferring a low-dimensional subspace aligned with the dominant propagation geometry, thereby achieving Type-II-comparable subspace quality with significantly reduced overhead.

\section{Conclusion} \label{sec:conclusion}
This paper developed a unified subspace-projection framework for limited-feedback beamforming and proposed a site-specific Type-II feedback scheme. By using offline learned SSI together with low-overhead RSRP fingerprints, the BS infers a UE-dependent dominant subspace in advance, so that the UE only needs to estimate and feed back low-dimensional effective CSI coefficients within that subspace. Simulation results showed that the proposed scheme achieves a more favorable overhead-performance tradeoff than conventional feedback mechanisms, retaining Type-II-comparable subspace quality with substantially lower online CSI acquisition overhead and UE-side complexity. Future work may extend the framework to multi-user and multi-stream settings.

\balance
\bibliographystyle{IEEEtran}
\bibliography{reference/mybib}

\end{document}